\newtheorem{lemma}{\hspace{-11pt}\bf Lemma}
\newtheorem{proposition}{\hspace{-11pt}\bf Proposition}
\newtheorem{theorem}{\hspace{-11pt}\bf Theorem}
\newtheorem{remark}{\hspace{-11pt}\bf Remark}
\newtheorem{property}{\hspace{-11pt}\bf Property}
\long\def\symbolfootnote[#1]#2{\begingroup
\def\thefootnote{\fnsymbol{footnote}}
\footnote[#1]{#2}\endgroup}
\begin{document}
\title{Stochastic Online Control for Energy-Harvesting Wireless Networks with Battery Imperfections}
%\author{Xin Wang,~\IEEEmembership{Senior Member, IEEE}, Yu Zhang,~\IEEEmembership{Student Member, IEEE}, \\
%Tianyi Chen,~\IEEEmembership{Student Member, IEEE}, and Georgios B. Giannakis,~\IEEEmembership{Fellow, IEEE}}
\author{Xin Wang,~\IEEEmembership{Senior Member, IEEE}, Tianhui Ma,
Rongsheng Zhang, and Xiaolin Zhou,~\IEEEmembership{Member, IEEE}
\thanks {Work in this paper was supported by the National Natural Science Foundation
of China under Grant No. 61571135, the China Recruitment Program of Global Young Experts, the Program for New Century Excellent Talents in University, the Innovation Program of Shanghai Municipal Education Commission.}
%\thanks {Work in this paper was supported by the the China Recruitment Program of Global Young Experts, the Program for New Century Excellent Talents in University, the Innovation Program of Shanghai Municipal Education Commission.}
\thanks{X. Wang, T. Ma, R. Zhang, and X. Zhou are with the Key Laboratory for Information Science of Electromagnetic Waves (MoE), Department of Communication Science and Engineering, Fudan University, 220 Han Dan Road, Shanghai, China. Email: \{xwang11, 14210720145, 15210720130, zhouxiaolin\}@fudan.edu.cn.}}
%}~xwang11@fudan.edu.cn.
%\thanks {Copyright (c) 2013 IEEE. Personal use of this material is permitted. However, permission to use this material for any other purposes must be obtained from the IEEE by sending a request to pubs-permissions@ieee.org.}
%\thanks{X. Wang is with the Key Laboratory for Information Science of Electromagnetic Waves (MoE), Department of Communication Science and Engineering, Fudan University, 220 Han Dan Road, Shanghai, China, and with the Department of Computer \& Electrical Engineering and Computer Science, Florida Atlantic University, 777 Glades Road, Boca Raton, FL 33431. Email:~xwang11@fudan.edu.cn.
%
%Y. Zhang, T. Chen and G. B. Giannakis are with the Department of Electrical and Computer Engineering and the Digital Technology Center, University of Minnesota, Minneapolis, MN 55455 USA. Emails: \{zhan1220, chen3827, georgios\}@umn.edu.}
%}
\markboth{}{}
\maketitle
\begin{abstract}
%Due to increasing threats of global warming and climate change concerns, green wireless communications have recently drawn intense attention towards reducing carbon emissions. Aligned with this goal, the present paper deals with dynamic energy management for smart-grid powered coordinated multi-point (CoMP) transmissions. To address the intrinsic variability of renewable energy sources (RES), a novel energy transaction mechanism is introduced for the grid-connected and storage-enabled base stations. Aiming to minimize the expected energy transaction cost while guaranteeing the worst-case users' quality of service (QoS), an infinite-horizon optimization problem is formulated to obtain the optimal downlink transmit beamformers that are robust to channel uncertainties. Capitalizing on the virtual-queue based relaxation technique and the stochastic dual-subgradient method, an efficient online algorithm is developed yielding a feasible and asymptotically optimal solution. Numerical results corroborate the theoretical analysis and show the merits of the novel approach.
In energy harvesting (EH) network, the energy storage devices (i.e., batteries) are usually not perfect. In this paper, we consider a practical battery model with finite battery capacity, energy (dis-)charging loss, and energy dissipation.
 %We  propose an online algorithm with battery imperfection by using the Lyapunov optimization technique. The proposed algorithm does not  require any knowledge of the harvestable energy processes.The proposed algorithm does not have an EH admission mechanism for any given finite battery capacities, which can fully capitalize on the harvesting energy . With perfect batteries ,we choose the biggest feasible V,while with imperfect batteries,we need to choose V carefully to achieve satisfying utility.
Taking into account such battery imperfections, we rely on the Lyapunov optimization technique to develop a stochastic online control scheme that aims to maximize the utility of data rates for EH multi-hop  wireless networks. It is established that the proposed algorithm can provide a feasible and efficient data admission, power allocation,  routing and scheduling solution, without requiring any statistical knowledge of the stochastic channel, data-traffic, and EH processes.
Numerical results demonstrate the merit of the proposed scheme.
\end{abstract}

\begin{keywords}
Stochastic optimization, energy harvesting, battery imperfections, wireless networks.
\end{keywords}

\section{Introduction}

From being a scientific curiosity only a few years ago, energy harvesting (EH) is well on its way to becoming a game-changing technology in the field of self-sustainable, autonomous wireless networked systems. A major factor that has contributed to the growth of the EH market is the evolution of ultra-low power electronics, which can run on the minuscule amounts of energy supplied by typical solar, vibration or thermal energy harvesters \cite{Sud11}. A number of companies are already offering system solutions consisting exclusively of EH sensor nodes \cite{Lord, Tec11}. Stimulated by these advances, EH-powered wireless communications have attracted growing interest in recent years \cite{Lei09, Sha10}.

Different from traditional communication systems, EH from environmental sources shifts the paradigm on resource allocation from reducing energy consumption to the most efficient utilization of opportunistic energy. Energy availability constraints are imposed such that the energy accumulatively consumed up to any time cannot exceed what has been accumulatively harvested so far. Considering these new type of constraints, optimal transmission policies were characterized for point-to-point channels in \cite{Yan12, Ho12, Tut12, Oze11, Lei09, Sha10}, for broadcast channels in \cite{Ant11, Yang12, Oze12}, for multi-access channels in \cite{YANG12}, for interference channels in \cite{TUT12}, for two-hop relay channels in \cite{Hua13, YLuo13}, for transmitters with non-ideal circuit-power consumption \cite{Xu13, Orh12, Oze13, Wan15}, and for systems with battery imperfections in \cite{Dev12, Luo13}.

Existing works \cite{Yan12, Ho12, Tut12, Oze11, Lei09, Sha10, Ant11, Yang12, Oze12, YANG12, TUT12, Hua13, YLuo13, Xu13, Orh12, Oze13, Wan15, Dev12, Luo13} on EH communications mostly addressed offline optimizations, where the EH profiles were assumed to be known {\em a priori}. In practical scenarios, complete predictability of EH profiles is clearly an over-simplified assumption. Relying on past realizations of EH processes and certain statistics of their future evolutions, \cite{YLuo13, Oze11, Xu13, Wan15} developed some heuristic online algorithms, which, however, lack strong analytical performance guarantees. By modeling the EH and/or data processes as Markov processes, online optimizations were cast as Markov decision problems (MDP) and numerically solved with dynamic programming tools in \cite{Bla13, Lei09, Sha10, Sri13}. However, the well-known ``curse-of-dimensionality'' with such solutions precludes their application for all but the simplest practical networks.

Leveraging stochastic optimization tools, a few low-complexity online schemes were developed in \cite{Lin07, Gat10, LHua13, Mao12, Che14}. Considering the energy-aware routing with energy replenishment at nodes, \cite{Lin07} proposed an algorithm that achieves a logarithmic competitive ratio and is asymptotically optimal as the network size grows. A simple asymptotically optimal joint energy allocation and routing scheme was developed for rechargeable sensor networks with static and non-interfering (i.e., orthogonal) links in \cite{Che14}. Relying on the Lyapunov optimization techniques, \cite{Gat10, LHua13} developed and analyzed the utility-optimal resource scheduling schemes for general EH-powered multi-hop wireless networks, while \cite{Mao12} addressed the effect of finite energy and data storage capacities on such resource allocation tasks.

The schemes in \cite{Lin07, Gat10, LHua13, Mao12, Che14} assumed ideal energy storage devices (i.e., batteries) in use. Under this assumption, the energy-queue sizes at the batteries can play the role of ``stochastic'' Lagrange multipliers to develop a dual-subgradient based solver to the intended problems. However, the imperfections with practical batteries could disable this approach. In this paper, we consider a practical battery model accounting for finite battery capacity, energy (dis-)charging loss, and energy dissipation over time. By integrating and generalizing  the Lyapunov optimization techniques in \cite{LHua13, Qin15}, we re-establish a systematic framework to develop and analyze the stochastic online control schemes for EH wireless networks with such imperfect batteries. Specifically, we propose a data-backpressure based scheduling and degenerated energy-queue based power allocation scheme that can maximize the utility of data rates for EH multi-hop wireless networks, without requiring any statistical knowledge of the stochastic channel, data-traffic, and EH processes. Different from \cite{LHua13} where an EH admission mechanism is performed to ensure finite energy queues, we apply the sample path analysis in \cite{Urg11, WangJSAC15} to derive the conditions that the proposed scheme is feasible for any given finite battery capacities without EH admission, which can help fully exploit the available harvested energy. In addition, we rigorously establish the performance guarantees of the proposed scheme in form of sub-optimality bounds in the presence of practical battery imperfections. Numerical results demonstrate that the proposed scheme can significantly outperform the existing alternatives.
%In relevant contexts, \cite{Urg11, WangJSAC15}

The rest of the paper is organized as follows. The system models are described in Section II. The proposed dynamic resource management scheme
is developed and analyzed in Section III. Numerical results are provided in Section IV, followed by concluding remarks in Section V.
%
%\noindent \emph{Notation}. Boldface lower (upper) case letters represent vectors (matrices);
%$\mathbb{C}^{N \times M}$, $\mathbb{R}^{N \times M}$, and $\mathbb{C}^{N}$ stand for spaces of $N \times M$
%complex, real matrices, and $N \times 1$ complex vectors, respectively;
%$(\cdot)'$ denotes transpose, and $(\cdot)^H$ conjugate transpose; $\text{tr}(\mathbf{A})$ and $\text{rank}(\mathbf{A})$
%the trace and rank operators for matrix $\mathbf{A}$, respectively; $\text{diag}(a_1, \ldots, a_M)$ denotes a diagonal matrix
%with diagonal elements $a_1, \ldots, a_M$;
%$| \cdot |$ the magnitude of a complex scalar; $\mathbf{A} \succeq \mathbf{0}$ signifies that $\mathbf{A}$ is positive semi-definite; and
%$\mathbb{E}$ denotes the expectation.

\section{System Models}

%\begin{figure*}\label{sec:model}
%\centering
%\includegraphics[width=0.7\textwidth]{system.eps}
%\caption{A smart grid powered CoMP system.
%Two BSs with local renewable energy harvesting and storage devices implement two-way energy trading with the main grid.}
%\label{fig:system}
%\end{figure*}

As with \cite{Gat10, LHua13, Mao12}, we consider a general EH multi-hop wireless network that operates in slotted time. For convenience, the slot duration is normalized to unity; thus, the terms ``energy'' and ``power'' can be sometimes used interchangeably. The network is represented by a graph ${\cal G} = ({\cal N}, {\cal L})$, where ${\cal N}=\{1, \ldots, N\}$ denotes the set of network nodes, and ${\cal L}=\{[n,m], \; n, m \in {\cal N}\}$ collects the directed links between nodes. For each node $n \in {\cal N}$, define two sets of neighbor nodes ${\cal N}_n^o :=\{m:\;  \forall [n,m] \in {\cal L}\}$, and ${\cal N}_n^i:=\{m:\;  \forall [m,n] \in {\cal L}\}$. Further define
\begin{equation}
    d_{\max} := \max_n \{|{\cal N}_n^i|, |{\cal N}_n^o|\} \nonumber
\end{equation}
as the maximum in- and out-degree for nodes in the network.

\subsection{Network Traffic and EH Model}

The network delivers packets for data flows (called commodity in \cite{Gat10, LHua13}) indexed by their destination nodes $c$. Per time slot $t$, a data admission is implemented by the network to decide the number $R_n^c(t)$ of packets for flow $c$ that can be newly admitted at node $n$. We assume that
\begin{equation}\label{eq.Rn}
    0 \leq R_n^c(t) \leq R_{\max}, \quad \forall n, c
\end{equation}
with some finite $R_{\max} < \infty$ at all time.

Every node in the network is assumed to be capable of harvesting energy from the environmental (e.g., renewable) sources to power its transmissions. The amount of harvested energy is clearly random and intermittent over time. Let $e_n(t)$ denote the amount of harvested energy by node $n$ at time slot $t$, and let $\boldsymbol{e}(t):=[e_1(t), \ldots, e_N[t)]$ be called the {\em energy state} at $t$. We assume that $\boldsymbol{e}(t)$ takes values in some finite set and there exists $e_{\max} < \infty$ such that
\[
    0 \leq e_n(t) \leq e_{\max}, \quad \forall n, \forall t.
\]

\subsection{Transmission Model}

Per slot $t$, let $\boldsymbol{S}(t)$ denote time-varying, random {\em channel state}, which in general can be a $N$-by-$N$ matrix where the $(n,m)$ component denotes the channel condition between nodes $n$ and $m$. We assume that $\boldsymbol{S}(t)$ take values in some finite set for all time. Given $\boldsymbol{S}(t)$, the network allocates a power vector $\boldsymbol{P}(t):=[P_{[n,m]}(t), \; \forall [n,m] \in {\cal L}]$ for data transmissions over links, where $P_{[n,m]}(t)$ denotes the power allocated to node $n$ for link $[n,m]$ at time $t$. We assume that each node has a peak power constraint such that
\begin{equation}\label{eq.Pnm}
    0 \leq \sum_{m \in {\cal N}_n^o} P_{[n,m]}(t) \leq P_{\max}, \quad \forall n
\end{equation}
with a constant $P_{\max} < \infty$ at all time.

Given the channel state $\boldsymbol{S}(t)$ and the power allocation vector $\boldsymbol{P}(t)$, the transmission rate over the link $[n,m]$ is dictated by a rate-power function
\begin{equation}\label{eq.mu}
    \mu_{[n,m]}(t) = \mu_{[n,m]} (\boldsymbol{S}(t), \boldsymbol{P}(t)).
\end{equation}
For any bounded power allocation $\boldsymbol{P}(t)$, we also assume that there exists a finite constant $\mu_{\max} < \infty$ such that
\[
    \mu_{[n,m]}(t) \leq \mu_{\max}, \quad \forall [n,m] \in {\cal L}
\]
for all time under any channel state $\boldsymbol{S}(t)$. Now let $\mu_{[n,m]}^c(t)$ denote the rate allocated to the data flow $c$ over link $[n,m]$ at time $t$. It is clear that we have:
\begin{equation}\label{eq.muc}
    \sum_c \mu_{[n,m]}^c(t) \leq \mu_{[n,m]}(t), \quad \forall [n,m].
\end{equation}

Let $\boldsymbol{Q}(t):=[Q_n^c(t), \forall n,c \in {\cal N}]$ denote the data queue backlog vector in the network at time $t$, where $Q_n^c(t)$ is the amount of data for flow $c$ queued at node $n$. For the given data admission and rate allocation, we have
\begin{equation}\label{eq.Q}
\begin{split}
    Q_n^c(t+1) \leq & \left[Q_n^c(t) - \sum_{m \in {\cal N}_n^o} \mu_{[n,m]}^c(t)\right]^+ \\
     & ~~~+ \sum_{m \in {\cal N}_n^i} \mu_{[m,n]}^c(t) +R_n^c(t), \quad \forall n,c
\end{split}
\end{equation}
with $Q_n^c(0)=0$, $\forall n,c \in {\cal N}$, $Q_c^c(t)=0$, $\forall t$, and $[x]^+ := \max\{x,0\}$. Note that the inequality in (\ref{eq.Q}) is due to the fact that some nodes may not have enough packets for flow $c$ to fill the allocated rates \cite{LHua13}.

\subsection{Imperfect Battery Model}

Every node in the network has an energy storage device, i.e., battery, to save the harvested energy. Consider a practical battery with: i) a finite capacity, ii) (dis-)charging loss, and iii) energy degeneration. Let $E_{\max} \in (0, \infty)$ denote the battery capacity, $\xi \in (0,1]$ the (dis-)charging efficiency (e.g., $\xi =0.9$ means that only 90\% of the charged or discharged energy is useful while other is conversion loss)\footnote{Here we assume without loss of generality that the charging and discharging efficiency is the same.}, and $\eta \in (0,1]$ the storage efficiency (e.g., $\eta =0.9$ means that 10\% of the stored energy will be ``leaked'' over a slot, even in the absence of discharging).

We can model the battery using an energy queue. Let $E_n(t)$ denote the energy queue size, which indicates the amount of the energy left in the battery of node $n$ at time $t$; and let $\boldsymbol{E}(t):=[E_n(t), \forall n \in {\cal N}]$. As the data transmissions are powered by the harvested energy stored in the batteries, the power allocation vector $\boldsymbol{P}(t)$ must satisfy the following ``energy availability'' constraint:
\begin{equation}\label{eq.EA}
    \sum_{m \in {\cal N}_n^o} P_{[n,m]}(t) \leq \xi \eta E_n(t), \quad \forall n
\end{equation}
where the product $\xi \eta$ captures the discharging loss and energy degeneration.

Due to the energy availability constraint (\ref{eq.EA}), the energy queue $E_n(t)$ evolves according to the dynamic equation:
\begin{align}
    & E_n(t+1) = \eta E_n(t)-\frac{\sum_{m \in {\cal N}_n^o} P_{[n,m]}(t)}{\xi} + \xi e_n(t), \label{eq.E}\\
    & 0 \leq E_n(t) \leq E_{\max} \label{eq.Emax}
\end{align}
with $E_n(0)=0$, $\forall n$.

\subsection{Network Utility Maximization}

Define the time-average rate for data flow $c$ that is admitted into node $n$, as
\[
    \bar{r}_n^c = \lim_{T\rightarrow \infty} \frac{1}{T} \sum_{t=0}^{T-1} \mathbb{E}\{R_n^c(t)\}
\]
where the expectation are taken over all sources of randomness. Each flow $c$ is associated with a utility function $U_n^c(\bar{r}_n^c)$, which is assumed to be increasing, continuously differentiable, and strictly concave. Let $g_n^c$ denote the maximum first derivative of $U_n^c(r)$, and define
\[
    g_{\max} = \max_{n,c} g_n^c
\]
which is assumed to be finite.

Note that the energy state $\boldsymbol{e}(t)$ and the channel state $\boldsymbol{S}(t)$ are random processes in our model. The EH wireless network to be controlled is thus a stochastic system. The goal is to design an online resource management scheme that chooses the data admission amounts $\boldsymbol{R}(t):=[R_n^c(t), \forall n, \forall c]$, the power allocations $\boldsymbol{P}(t)=[P_{[n,m]}(t), \forall [n,m]]$, as well as the routing and scheduling decisions $\boldsymbol{\mu}(t):=[\mu_{[n,m]}^c(t),\forall [n,m],\forall c]$ per slot $t$, so as to maximize the aggregate utility of time-average data rates subject to (s. t.) network operation constraints.
Upon defining ${\cal X}:=\{\boldsymbol{R}(t), \boldsymbol{P}(t), \boldsymbol{\mu}(t), \forall t\}$, we wish to solve
\begin{equation}\label{eq.prob}
\begin{split}
   U^{opt} := & \max_{{\cal X}} \;\sum_{n,c} U_n^c(\bar{r}_n^c) \\
   & \text{s. t.} ~~~ (\ref{eq.Rn}), (\ref{eq.Pnm}), (\ref{eq.mu}), (\ref{eq.muc}), (\ref{eq.Q}), (\ref{eq.EA}), (\ref{eq.E}), (\ref{eq.Emax}),~~ \forall t.
\end{split}
\end{equation}
Note that here the constraints (\ref{eq.Rn})--(\ref{eq.Emax}) are in fact implicitly required to hold for any realization of the underlying random state $\{\boldsymbol{e}(t), \boldsymbol{S}(t)\}$ per slot $t$.

\section{Dynamic Resource Management Scheme}

The problem (\ref{eq.prob}) is a stochastic optimization problem. The problem is challenging as the optimization variables are coupled over time due to the queue dynamics and energy availability constraints in (\ref{eq.Q})--(\ref{eq.E}). Such problems usually have to be solved by dynamic programming (DP), which suffers from a curse of dimensionality and requires significant knowledge of network statistics. Differently, we next integrate and generalize the Lyapunov optimization techniques in \cite{LHua13, Qin15} to develop a low-complexity online control algorithm, which can be proven to yield a feasible  near-optimal solution for (\ref{eq.prob}) under conditions, without requiring any statistical knowledge of stochastic EH and channel processes.

\subsection{Properties of Rate-Power Function}

To start, we assume that the rate-power function in (\ref{eq.mu}) satisfies the following two properties for any given channel state $\boldsymbol{S}$:
\begin{property}
    For two power allocation vectors $\boldsymbol{P}$ and $\boldsymbol{P}'$, where $\boldsymbol{P}'$ is obtained by changing any single component $P_{[n,m]}$ to zero, we have:
    \begin{itemize}
        \item[i)] $\mu_{[n',m']}(\boldsymbol{S}, \boldsymbol{P}) \leq \mu_{[n',m']}(\boldsymbol{S}, \boldsymbol{P}')$, $\forall [n',m'] \neq [n,m]$;
        \item[ii)] $0 \leq \mu_{[n,m]}(\boldsymbol{S}, \boldsymbol{P}) - \mu_{[n,m]}(\boldsymbol{S}, \boldsymbol{P}') \leq \delta_1 P_{[n,m]}$, for a finite constant $\delta_1 \in [0, \infty)$.
    \end{itemize}
\end{property}

\begin{property}
    For two power allocation vectors $\boldsymbol{P}$ and $\boldsymbol{P}'$, where $P_{[n,m]}'=P_{[n,m]}+\frac{\Delta P}{|{\cal N}_n^o|}$, $\forall m \in {\cal N}_n^o$, and $P_{[n',m]}'=P_{[n',m]}$, $\forall n' \neq n$, we have:
    \begin{itemize}
        \item[i)] $\mu_{[n,m]}(\boldsymbol{S}, \boldsymbol{P}) \leq \mu_{[n,m]}(\boldsymbol{S}, \boldsymbol{P}')$, $\forall \forall m \in {\cal N}_n^o$;
        \item[ii)] $0 \leq \sum_{n' \neq n} \sum_{m \in {\cal N}_{n'}^o} [\mu_{[n',m]}(\boldsymbol{S}, \boldsymbol{P}) - \mu_{[n',m]}(\boldsymbol{S}, \boldsymbol{P}')] \leq \delta_2 \Delta P$, for a finite constant $\delta_2 \in [0, \infty)$.
    \end{itemize}
\end{property}

Property 1 was one of the keys for asymptotic optimality analysis of the ESA scheme in \cite{LHua13}. Here, Properties 1 and 2 will be the keys for our feasibility and optimality gap analysis in the sequel. These properties are actually satisfied by most rate-power functions, e.g., when the rate function has finite directional derivatives with respect to power, and the rates do not improve with increased interference. To see it, two intriguing examples are provided below.

{\em Example 1 (Interference Channels)}: Consider the general interference channel case. Let $h_{[n,m]}$ denote the channel coefficient from node $n$ to node $m$; i.e., $\boldsymbol{S}=[h_{[n,m]}, \forall n, m \in {\cal N}]$. The rate function is then
\[
    \mu_{[n,m]}= \log \left(1+ \frac{|h_{[n,m]}|^2P_{[n,m]}}{\sum_{[n',m'] \neq [n,m]}(|h_{[n',m]}|^2P_{[n',m']}) +\sigma^2}\right)
\]
where $\sigma^2$ is the noise variance at nodes. It can be readily derived
\begin{equation}\nonumber
\begin{cases}
    \delta_1 = \max\{\frac{|h_{[n,m]}|^2}{\sigma^2}, \forall [n,m], \;\forall \boldsymbol{S}\} \\
    \delta_2 = \max\{\frac{\sum_{n' \neq n}\sum_{m' \in {\cal N}_{n'}^o} |h_{[n,m']}|^2  }{\sigma^2}, \;\forall \boldsymbol{S}\}
\end{cases}
\end{equation}
Clearly, $\delta_1$ and $\delta_2$ are finite as long as all channel gains $|h_{[n,m]}|^2$ are bounded at all time. This is in fact also the necessary condition for $\mu_{\max} < \infty$.

{\em Example 2 (Orthogonal Channels)}: If the wireless links do not interfere with each other (e.g., when the adjacent
nodes operate on orthogonal frequency bands), then the rate function is simply
\[
    \mu_{[n,m]}= \log \left(1+ \frac{ |h_{[n,m]}|^2P_{[n,m]} } {\sigma^2}\right).
\]
We readily have
\begin{equation}\nonumber
\begin{cases}
    \delta_1 = \max\{\frac{|h_{[n,m]}|^2}{\sigma^2}, \forall [n,m], \;\forall \boldsymbol{S}\} \\
    \delta_2 =0
\end{cases}
\end{equation}
Again, $\delta_1$ is finite if all channel gains $|h_{[n,m]}|^2$ are bounded at all time.

\subsection{The Proposed Algorithm}

We assume the following two conditions for the system parameters in development of the proposed algorithm:
\begin{align}
    & \xi e_{\max} \leq (1-\eta) E_{\max}  + \frac{P_{\max}}{\xi}; \label{eq.A1} \\
    & E_{\max} \geq \frac{P_{\max}}{\xi}+ \xi e_{\max}. \label{eq.A2}
\end{align}

Condition (\ref{eq.A1}) is a necessary condition to maintain the stability of the energy queues $E_n(t)$ for every sample path. If $\xi e_{\max} > (1-\eta) E^{\max}  + \frac{P_{\max}}{\xi}$, i.e., the maximum energy arrival is deterministically greater than the largest energy departure possible, then there exists a sample path of energy queue $E_n(t)$ that grows unbounded. The condition (\ref{eq.A1}) is thus required for establishing the sample path result in the ensuing feasibility analysis. On the other hand, condition (\ref{eq.A2}) dictates that the battery capacity is large enough to accommodate the largest possible charging/discharging range. This then makes the system ``controllable'' by the proposed online scheme.\footnote{From (\ref{eq.A1}) and (\ref{eq.A2}), we require $e_{\max} \leq \frac{1}{\xi} \min\{(1-\eta) E_{\max}  + \frac{P_{\max}}{\xi},  E_{\max}- \frac{P_{\max}}{\xi}\}$. This could be ensured by ``shaping'' or ``clipping'' the energy arrivals by some external mechanisms.}

Our algorithm depends on two algorithmic parameters, namely a ``queue perturbation'' parameter $\Gamma$ and a weight parameter $V$. Derived from the feasibility requirement of the proposed algorithm (see the proof of Proposition 1 in the sequel), any pair $(V, \Gamma)$ that satisfies the following conditions can be used:
\begin{equation}\label{eq.GV}
    0 < V < V^{\max}, \quad \Gamma^{\min} \leq \Gamma \leq \Gamma^{\max}
\end{equation}
where
\begin{flalign}
    & V^{\max} := \frac{E_{\max}-\xi e_{\max} -\frac{P_{\max}}{\xi}}{\xi (\delta_1+\delta_2)g_{\max}};\label{eq.GV3}\\
    & \Gamma^{\min} := \frac{P_{\max}}{\xi \eta} + \frac{\xi}{\eta}\delta_1 g_{\max} V;  \label{eq.GV1}\\
    & \Gamma^{\max} := \frac{E_{\max}-\xi e_{\max}}{\eta} -  \frac{\xi}{\eta}\delta_2 g_{\max} V. \label{eq.GV2}
\end{flalign}
Note that the interval for $V$ in (\ref{eq.GV}) is well-defined under the condition (\ref{eq.A2}), and the interval for $\Gamma$ is valid when $V\leq V^{\max}$.

We now present the proposed algorithm:

\textbf{Initialization}: Select a pair of $(V, \Gamma)$ satisfying (\ref{eq.GV}), and a constant $\Theta = R_{\max} +d_{\max} \mu_{\max}$.

At every time slot $t$, observe states $\{\boldsymbol{e}(t), \boldsymbol{S}(t)\}$, and queues $\{\boldsymbol{Q}(t), \boldsymbol{E}(t)\}$, then determine $\boldsymbol{R}^*(t):=[R_n^{c*}(t), \forall n, c]$, $\boldsymbol{P}^*(t)=[P_{[n,m]}^*(t), \forall [n,m]]$, and $\boldsymbol{\mu}^*(t):=[\mu_{[n,m]}^{c*}(t),\forall [n,m],\forall c]$ as follows.
\begin{itemize}
  \item \textbf{Data admission}: Choose $R_n^{c*}(t)$, $\forall n,c$, to be the optimal solution of the following problem:
  \begin{equation}\label{eq.DA}
  \begin{split}
    \max_{R_n^c(t)}~ & [V U_n^c(R_n^c(t))- Q_n^c(t) R_n^c(t)]\\
    \text{s. t.} ~& ~0 \leq R_n^c(t) \leq R_{\max}
  \end{split}
  \end{equation}

  \item \textbf{Power allocation}: Define the weight of the flow $c$ over link $[n,m]$ as the ``perturbed'' queue-backpressure:
  \begin{equation}\nonumber
    W_{[n,m]}^c(t)=[Q_n^c(t)-Q_m^c(t)-\Theta]^+;
  \end{equation}
  and define the link weight as $W_{[n,m]}(t)=\max_c W_{[n,m]}^c(t)$. Choose $\boldsymbol{P}^*(t)$ to be the optimal solution of the following problem
  \begin{equation}\label{eq.PA}
  \begin{split}
    \max_{\boldsymbol{P}(t)}\; &  \sum_n\Bigl[\sum_{m \in {\cal N}_n^o} [W_{[n,m]}(t) \mu_{[n,m]}(t)]\Bigr. \\
    & ~~~~~~\Bigl.+ \frac{\eta}{\xi}(E_n(t)-\Gamma) \sum_{m \in {\cal N}_n^o}P_{[n,m]}(t)\Bigr] \\
    \text{s. t.} ~& ~ 0 \leq \sum_{m \in {\cal N}_n^o} P_{[n,m]}(t) \leq P_{\max}, \;\; \forall n
  \end{split}
  \end{equation}
  Note that $\mu_{[n,m]}(t)$ is a function of $\boldsymbol{P}(t)$. Having obtained $\boldsymbol{P}^*(t)$, the rate allocated to link $[n,m]$ is $\mu_{[n,m]}^*(t)=\mu_{[n,m]}(\boldsymbol{S}(t), \boldsymbol{P}^*(t))$.

  \item \textbf{Routing and scheduling}: For each node $n$, choose any $\check{c} \in \arg \max_c W_{[n,m]}^c(t)$. If $W_{[n,m]}^{\check{c}}(t)>0$, set
  \[
    \mu_{[n,m]}^{\check{c}*}(t)=\mu_{[n,m]}^*(t), ~~\text{and}~ \mu_{[n,m]}^{c*}(t) = 0, \; \forall c \neq \check{c}.
  \]
  This is the well-known MaxWeight matching scheduling; that is, allocate the full rate over the link $[n,m]$ to any flow that achieves the maximum positive weight over this link. Use idle-fill if needed.

  \item \textbf{Queue updates}: Update $Q_n^c(t)$ and $E_n(t)$ via (\ref{eq.Q}) and (\ref{eq.E}), respectively, based on $\boldsymbol{R}^*(t)$, $\boldsymbol{P}^*(t)$, and $\boldsymbol{\mu}^*(t)$.
\end{itemize}

\begin{remark}
Some comments are in order.
\begin{itemize}
    \item[i)] Different from the ESA algorithm in \cite{LHua13}, there is no EH admission mechanism in the proposed algorithm; the available harvested energy could be then fully capitalized on for data transmission.

    \item[ii)] The perturbed energy queue-size $E_n(t)-\Gamma$ is weighted by $\frac{\eta}{\xi}$ in the problem (\ref{eq.PA}) to determine the optimal power allocation. These weighted are used to account for the battery degeneration and discharging loss.

    \item[iii)] Solving the power allocation problem (\ref{eq.PA}) requires centralized control and can be NP-hard. For the general ad-hoc networks with interference channels, a maximal-weighted-matching based polynomial-time solver \cite{Lin06} could be used to approximate the optimal scheduling within a constant factor in a distributed manner.
\end{itemize}
\end{remark}

The proposed algorithm is an online scheme, which dynamically makes {\em instantaneous} greedy control decisions for the stochastic system under consideration, without a-priori knowledge of any statistics of the underlying random processes. We next show that the proposed scheme can also yield a feasible and asymptotically near-optimal solution for the problem (\ref{eq.prob}) of interest under conditions.

\subsection{Feasibility Guarantee}

Note that in the proposed algorithm, energy availability constraints (\ref{eq.EA}) and the bounded energy queue constraints (\ref{eq.Emax}) are ignored. It is then not clear whether the algorithm is feasible for the problem (\ref{eq.prob}). Yet, we will show that by using any pair $(V, \Gamma)$ in (\ref{eq.GV}) and $\Theta=R_{\max} +d_{\max} \mu_{\max}$, it is guaranteed that the online control policy produced by the proposed algorithm is a feasible one for (\ref{eq.prob}) under the conditions (\ref{eq.A1})--(\ref{eq.A2}).

To this end, we first show the following lemma.
\begin{lemma}
The power allocation specified by the proposed algorithm obeys: i) $\sum_{m \in {\cal N}_n^o} P_{[n,m]}^*(t)=0$, if $E_n(t) < \Gamma- \frac{\xi}{\eta} \delta_1 g_{\max} V$; and ii) $\sum_{m \in {\cal N}_n^o} P_{[n,m]}^*(t)=P_{\max}$, if $E_n(t) > \Gamma + \frac{\xi}{\eta} \delta_2 g_{\max} V$, $\forall n$.
\end{lemma}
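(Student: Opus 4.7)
The plan is to prove each half of the lemma by contradiction: fix an optimum $\boldsymbol{P}^*$ of the power-allocation subproblem (\ref{eq.PA}), perturb it toward the relevant boundary of node $n$'s total power, and show that the stated condition on $E_n(t)$ forces the perturbation to strictly improve the objective. Properties~1 and 2 control the change in the weighted-rate term, but I also need a uniform bound $W_{[n,m]}(t)\leq V g_{\max}$ on the link weights so that the perturbed-queue contribution dominates. I would establish this by a sample-path induction proving $Q_n^c(t)\leq V g_{\max}+\Theta$ for every $n,c,t$: the base case is trivial; in the inductive step, $Q_n^c(t)\leq V g_{\max}$ leaves room for the worst-case single-slot increment $R_{\max}+d_{\max}\mu_{\max}=\Theta$, while in the regime $Q_n^c(t)>V g_{\max}$ the concavity of $U_n^c$ together with $g_n^c\leq g_{\max}$ makes the admission objective $V U_n^c(R)-Q_n^c(t)R$ strictly decreasing on $[0,R_{\max}]$, so $R_n^{c*}(t)=0$, and the inductive hypothesis $Q_m^c(t)\leq V g_{\max}+\Theta<Q_n^c(t)+\Theta$ zeroes $W_{[m,n]}^c(t)$ at every neighbor $m$, so the MaxWeight router delivers no flow-$c$ traffic into $n$; hence $Q_n^c(t+1)\leq Q_n^c(t)$ stays below the threshold.

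For part (i), I would assume $\boldsymbol{P}^*$ is optimal for (\ref{eq.PA}) yet $P^*_{[n,m]}>0$ for some $m$ under $E_n(t)<\Gamma-\tfrac{\xi}{\eta}\delta_1 g_{\max}V$. Zeroing $P^*_{[n,m]}$ alone to obtain $\boldsymbol{P}'$, Property~1(i) makes the other-link rate terms non-decreasing and Property~1(ii) bounds the drop in $\mu_{[n,m]}$ by $\delta_1 P^*_{[n,m]}$, so the weighted-rate term changes by at least $-W_{[n,m]}\delta_1 P^*_{[n,m]}\geq -V g_{\max}\delta_1 P^*_{[n,m]}$, while the perturbed-queue term contributes $+\tfrac{\eta}{\xi}(\Gamma-E_n)P^*_{[n,m]}$. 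The hypothesis on $E_n(t)$ then makes the net change strictly positive, contradicting optimality and forcing $P^*_{[n,m]}=0$ for every $m$. Part (ii) is the mirror argument: if $\Delta P:=P_{\max}-\sum_m P^*_{[n,m]}>0$, lift each $P^*_{[n,m]}$ ($m\in\mathcal{N}_n^o$) by $\Delta P/|\mathcal{N}_n^o|$ so that node $n$ reaches $P_{\max}$; Property~2(i) keeps the change on node $n$'s outgoing links non-negative, Property~2(ii) combined with $W\leq V g_{\max}$ lower-bounds the aggregate change on all other links by $-V g_{\max}\delta_2\Delta P$, and the perturbed-queue contribution $+\tfrac{\eta}{\xi}(E_n-\Gamma)\Delta P$ dominates under $E_n(t)>\Gamma+\tfrac{\xi}{\eta}\delta_2 g_{\max}V$, again contradicting optimality.

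The main obstacle is the sample-path induction bounding $Q_n^c(t)$: the delicate point is recognizing that in the congested regime both the admission rule \emph{and} the $\Theta$-perturbed back-pressure router jointly cut off every channel of volume growth into $Q_n^c$, which is precisely why $\Theta$ is chosen as $R_{\max}+d_{\max}\mu_{\max}$. Once the weight bound $W_{[n,m]}\leq V g_{\max}$ is in place, each half of Lemma~1 reduces to a short one-shot perturbation calculation in which the thresholds on $E_n(t)$ in the statement are calibrated exactly to cancel the $V g_{\max}\delta_i$ slack produced by Properties~1 and 2.
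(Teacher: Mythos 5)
Your proof is correct and takes essentially the same route as the paper's: the paper likewise reduces the lemma to the uniform link-weight bound $W_{[n,m]}(t)\le g_{\max}V$ (obtained by citing the deterministic data-queue bound of \cite[Theorem 2]{LHua13}, which is precisely the perturbed-backpressure induction you carry out from scratch) and then runs the same two perturbation-and-contradiction arguments, zeroing a single positive component for part (i) and spreading the residual $\Delta P$ over node $n$'s outgoing links for part (ii), with Properties 1 and 2 supplying the $\delta_1$ and $\delta_2$ slack that the thresholds on $E_n(t)$ are calibrated to cancel. The only cosmetic difference is your slightly looser queue bound $g_{\max}V+\Theta$ versus the cited $g_{\max}V+R_{\max}$, which still yields $W_{[n,m]}(t)\le g_{\max}V$ and hence suffices.
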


\begin{proof}
See Appendix A.
\end{proof}

Lemma 1 reveals partial characteristics of the proposed dynamic policy. Specifically, when the energy queue at node $n$ is large enough, peak power can be afforded for its data transmissions; i.e., $\sum_{m \in {\cal N}_n^o} P_{[n,m]}^*(t)=P_{\max}$. On the other hand, when the energy queue at node $n$ is small enough, no power should be allocated; i.e., $\sum_{m \in {\cal N}_n^o} P_{[n,m]}^*(t)=0$.
%Alternatively, such results can be justified by the economic interpretation of the virtual queues. Specifically, $E_n(t)-\Gamma$ can be viewed as the instantaneous power price. For high prices $-Q_i(t) > V\bar{\alpha}$, the TTS-OC dictates the full charge. Conversely, the battery units can afford full discharge if the price is low.

Based on the structure in Lemma 1, we can then establish the following result.
\begin{proposition}
Under the conditions (\ref{eq.A1})--(\ref{eq.A2}), the proposed algorithm guarantees: i) $\sum_{m \in {\cal N}_n^o} P_{[n,m]}^*(t)=0$, if $\xi \eta E_n(t) < P_{\max}$, and ii) $0 \leq E_n(t) \leq E_{\max}$, $\forall n$, $\forall t$.
\end{proposition}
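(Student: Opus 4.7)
The plan is to handle part (i) as an immediate consequence of Lemma 1 part (i) together with the lower bound $\Gamma\geq\Gamma^{\min}$, and then prove part (ii) by induction on $t$, using part (i) to control the lower bound of $E_n(t+1)$ and Lemma 1 part (ii) together with condition (\ref{eq.A1}) to control the upper bound.

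For part (i): since $\Gamma\geq\Gamma^{\min}=\frac{P_{\max}}{\xi\eta}+\frac{\xi}{\eta}\delta_1 g_{\max}V$, rearranging gives $\Gamma-\frac{\xi}{\eta}\delta_1 g_{\max}V\geq\frac{P_{\max}}{\xi\eta}$. Hence whenever $\xi\eta E_n(t)<P_{\max}$, equivalently $E_n(t)<\frac{P_{\max}}{\xi\eta}$, one also has $E_n(t)<\Gamma-\frac{\xi}{\eta}\delta_1 g_{\max}V$, and Lemma 1 part (i) immediately yields $\sum_{m\in\mathcal{N}_n^o}P^*_{[n,m]}(t)=0$.

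For part (ii): proceed by induction on $t$. The base case is trivial since $E_n(0)=0\in[0,E_{\max}]$. Assume $0\leq E_n(t)\leq E_{\max}$ and consider the update (\ref{eq.E}). For the lower bound on $E_n(t+1)$, split on whether $\xi\eta E_n(t)<P_{\max}$: in that regime part (i) zeros the discharge term and $E_n(t+1)=\eta E_n(t)+\xi e_n(t)\geq 0$; otherwise $\eta E_n(t)\geq\frac{P_{\max}}{\xi}$, and combining with the peak-power constraint $\sum_m P^*_{[n,m]}(t)\leq P_{\max}$ built into (\ref{eq.PA}) gives $E_n(t+1)\geq\eta E_n(t)-\frac{P_{\max}}{\xi}+\xi e_n(t)\geq 0$. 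For the upper bound, split on whether $E_n(t)\leq\Gamma+\frac{\xi}{\eta}\delta_2 g_{\max}V$ or not. In the low-energy case, drop the nonnegative discharge term to get $E_n(t+1)\leq\eta E_n(t)+\xi e_{\max}\leq\eta\Gamma+\xi\delta_2 g_{\max}V+\xi e_{\max}$, and then substitute $\Gamma\leq\Gamma^{\max}=\frac{E_{\max}-\xi e_{\max}}{\eta}-\frac{\xi}{\eta}\delta_2 g_{\max}V$ to collapse the right-hand side to exactly $E_{\max}$. In the high-energy case, Lemma 1 part (ii) forces $\sum_m P^*_{[n,m]}(t)=P_{\max}$, so using the induction hypothesis $E_n(t)\leq E_{\max}$ one obtains $E_n(t+1)\leq\eta E_{\max}-\frac{P_{\max}}{\xi}+\xi e_{\max}$, and condition (\ref{eq.A1}) is precisely the inequality needed for this to be $\leq E_{\max}$.

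The main obstacle is not any single calculation but arranging the case analysis so that each of the two system-parameter assumptions and both parts of Lemma 1 are invoked where they are actually needed: condition (\ref{eq.A2}) enters implicitly by guaranteeing that the interval $[\Gamma^{\min},\Gamma^{\max}]$ and $(0,V^{\max})$ are nonempty (so a valid $(V,\Gamma)$ exists), $\Gamma\geq\Gamma^{\min}$ is what aligns Lemma 1's threshold with $P_{\max}/(\xi\eta)$ for part (i), $\Gamma\leq\Gamma^{\max}$ absorbs the arrival $\xi e_{\max}$ in the low-energy upper-bound case, and (\ref{eq.A1}) is exactly the sample-path balance that rules out overflow in the peak-discharge case. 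Once the case splits are set up along the thresholds $\frac{P_{\max}}{\xi\eta}$ and $\Gamma+\frac{\xi}{\eta}\delta_2 g_{\max}V$, the induction closes in one line per case.
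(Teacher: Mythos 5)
Your proof is correct and follows essentially the same route as the paper: part (i) from Lemma 1(i) and $\Gamma\geq\Gamma^{\min}$, then induction for part (ii) using Lemma 1, the definitions of $\Gamma^{\min},\Gamma^{\max}$, and condition (\ref{eq.A1}) for the peak-discharge overflow case. The only (immaterial) difference is organizational — the paper runs a single three-case split on $E_n(t)$ relative to the Lemma 1 thresholds, whereas you handle the lower and upper bounds with two separate two-case splits; all the same ingredients are invoked at the same points.
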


\begin{proof}
See Appendix B.
\end{proof}

\begin{remark}
Proposition 1 establishes that node $n$ allocates nonzero power over any of its outgoing links only when its energy queue-size $E_n(t) \geq \frac{P_{\max}}{\xi \eta}$ in the proposed scheme. Hence, all the power allocation decisions are feasible, i.e., the energy availability constraint (\ref{eq.EA}) is indeed {\em redundant} and can be ignored in the problem (\ref{eq.PA}). Furthermore, the energy queue-sizes $E_n(t)$ resulted from the proposed scheme are guaranteed to be bounded within $[0, E_{\max}]$. These two results together imply that the proposed algorithm with proper selection of $(V, \Gamma)$ and $\Theta$ can always yield a feasible control policy for (\ref{eq.prob}) under the conditions (\ref{eq.A1})--(\ref{eq.A2}). Note that Proposition 1 is a {\em sample path} result; i.e., it holds for every time slots under {\em arbitrary}, even non-stationary, $\{\boldsymbol{e}(t), \boldsymbol{S}(t)\}$ processes.
\end{remark}

\subsection{Optimality Gap}

To facilitate the optimality analysis, we assume for now that the random process for  $\{\boldsymbol{e}(t), \boldsymbol{S}(t)\}$ is i.i.d. over time slots.

Define $\bar{e}_n= \mathbb{E}\{e_n(t)\}$, and
\begin{align}
\bar{E}_n &:= \lim_{T\rightarrow \infty} \frac{1}{T} \sum_{t=0}^{T-1} \mathbb{E}\{E_n(t)\}; \notag \\
\bar{P}_n &:= \lim_{T\rightarrow \infty} \frac{1}{T} \sum_{t=0}^{T-1} \mathbb{E}\{\sum_{m \in {\cal N}_n^o} P_{[n,m]}(t)\} \notag
\end{align}
Since $E_n(t)$, $\sum_{m \in {\cal N}_n^o} P_{[n,m]}(t)$, and $e_n(t)$ are all bounded, it follows from (\ref{eq.E}) that
\begin{equation}\label{relax1}
    (1-\eta)\bar{E}_n = \xi \bar{e}_n - \frac{\bar{P}_n}{\xi}.
\end{equation}
As $E_n(t) \in [0, E_{\max}]$, $\forall t$, (\ref{relax1}) then implies
\begin{equation}\label{relax2}
    0 \leq \xi \bar{e}_n - \frac{\bar{P}_n}{\xi} \leq (1-\eta)E_{\max}, \quad \forall n.
\end{equation}

For the queue dynamics in (\ref{eq.Q}), we can derive similar time-average constraints \cite{LHua13}:
\begin{equation}\label{relax3}
    \bar{r}_n^c + \sum_{m \in {\cal N}_n^i} \bar{\mu}_{[m,n]}^c \leq \sum_{m \in {\cal N}_n^o} \bar{\mu}_{[n,m]}^c  \quad \forall n, c
\end{equation}
where
\[
    \bar{\mu}_{[n,m]}^c:=\lim_{T\rightarrow \infty} \frac{1}{T} \sum_{t=0}^{T-1} \mathbb{E}\{\mu_{[n,m]}^c(t)\}.
\]

Consider the following problem:
\begin{equation}\label{eq.relax}
\begin{split}
    \tilde{U}^{opt} := & \max_{{\cal X}} \;\sum_{n,c} U_n^c(\bar{r}_n^c) \\
   & \text{s. t.} ~~~ (\ref{eq.Rn}), (\ref{eq.Pnm}), (\ref{eq.mu}), (\ref{eq.muc}),~ \forall t;~ (\ref{relax2}), (\ref{relax3}).
\end{split}
\end{equation}
Note that the queue dynamic constraints (\ref{eq.E}) and (\ref{eq.Q}), which need to be performed per realization per slot $t$, are replaced by the corresponding time-average constraints (\ref{relax2}) and (\ref{relax3}), and the constraints (\ref{eq.EA}) and (\ref{eq.Emax}) are ignored. It can be shown that the problem (\ref{eq.relax}) is a relaxed version of (\ref{eq.prob}). Specifically, any feasible solution of (\ref{eq.prob}), satisfying (\ref{eq.Q})--(\ref{eq.Emax}), $\forall t$, also satisfies (\ref{relax2}) and (\ref{relax3}), due to the relevant optimization variables. It then follows that $\tilde{U}^{opt} \geq U^{opt}$.

In (\ref{eq.relax}), the optimization variables are ``decoupled'' across time slots due to the removal of time-coupling constraints (\ref{eq.Q})--(\ref{eq.Emax}). This problem has an easy-to-characterize stationary optimal control policy as formally stated in the next lemma.
\begin{lemma}
If $\{\boldsymbol{e}(t), \boldsymbol{S}(t)\}$ is i.i.d., there exists a stationary control policy ${\cal P}^{stat}$ that is a pure (possibly randomized) function of the current $\{\boldsymbol{e}(t), \boldsymbol{S}(t)\}$, while satisfying (\ref{eq.Rn})--(\ref{eq.muc}), and providing the following guarantees per $t$:
\begin{equation}\label{eq.opt}
\begin{split}
    & R_n^{c, stat}(t) = \bar{r}_n^{c, stat}, \quad \sum_{n,c} U_n^c(\bar{r}_n^{c, stat}) = \tilde{U}^{opt}, \\
    & 0 \leq \xi \bar{e}_n - \frac{\sum_{m \in {\cal N}_n^o} \mathbb{E}\{P_{[n,m]}^{stat}(t)\}}{\xi} \leq (1-\eta)E_{\max}, \; \forall n \\
    & \bar{r}_n^{c, stat} + \sum_{m \in {\cal N}_n^i} \mathbb{E}\{\mu_{[m,n]}^{c,stat}(t)\} \leq \sum_{m \in {\cal N}_n^o}\mathbb{E}\{\mu_{[n,m]}^{c,stat}(t)\},   \; \forall n, c
\end{split}
\end{equation}
where $R_n^{c, stat}(t)$, $P_{[n,m]}^{stat}(t)$, and $\mu_{[n,m]}^{c,stat}(t)$ denote the data admission, power and rate allocation by policy ${\cal P}^{stat}$, and expectations are taken over the randomization of $\{\boldsymbol{e}(t), \boldsymbol{S}(t)\}$ and (possibly) ${\cal P}^{stat}$.
\end{lemma}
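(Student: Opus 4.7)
The plan is to invoke the standard machinery for stochastic network optimization under i.i.d. system states, adapted to the particular constraint set of (\ref{eq.relax}). I would start by characterizing, for each realization of the state $(\boldsymbol{e}, \boldsymbol{S})$, the per-slot action set $\mathcal{A}(\boldsymbol{e},\boldsymbol{S})$ of admission, power and rate vectors $(\boldsymbol{R},\boldsymbol{P},\boldsymbol{\mu})$ satisfying the per-slot constraints (\ref{eq.Rn})--(\ref{eq.muc}). Because $R_n^c$ and $P_{[n,m]}$ lie in compact boxes, $\mu_{[n,m]}$ is a continuous (hence bounded) function of $\boldsymbol{P}$ by Property~1--2, and the flow-split constraint (\ref{eq.muc}) is closed, each set $\mathcal{A}(\boldsymbol{e},\boldsymbol{S})$ is compact. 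The state space is finite by assumption, so the entire collection is uniformly bounded.

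Next I would form the set $\mathcal{R}$ of all long-term average performance vectors $(\bar{\boldsymbol{r}},\bar{\boldsymbol{P}},\bar{\boldsymbol{\mu}})$ achievable by \emph{state-only} stationary randomized policies, i.e.\ policies that at slot $t$ draw an action from some (possibly randomized) distribution over $\mathcal{A}(\boldsymbol{e}(t),\boldsymbol{S}(t))$ that depends only on the current state. By linearity of expectation and the i.i.d.\ assumption, each such policy yields a time-average equal to $\mathbb{E}_{(\boldsymbol{e},\boldsymbol{S})}[\mathbb{E}[\text{action}\mid \boldsymbol{e},\boldsymbol{S}]]$, so $\mathcal{R}$ equals the convex hull of the per-state expected actions averaged over the stationary distribution of $(\boldsymbol{e},\boldsymbol{S})$. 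A Carath\'eodory-type argument (as in Neely's standard framework) then shows $\mathcal{R}$ is convex and compact.

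The third step is to argue that in fact any feasible policy for (\ref{eq.relax})---not just stationary randomized ones---has its time-average tuple $(\bar{\boldsymbol{r}},\bar{\boldsymbol{P}},\bar{\boldsymbol{\mu}})$ lying in $\mathcal{R}$. This follows from the standard i.i.d.\ argument: conditional on each state value, the empirical distribution of chosen actions converges (along subsequences, by Prohorov / the Bolzano--Weierstrass theorem applied to distributions on the compact $\mathcal{A}(\boldsymbol{e},\boldsymbol{S})$) to some conditional distribution that defines a state-only randomized policy with the same time-averages. Consequently, (\ref{eq.relax}) is equivalent to maximizing $\sum_{n,c} U_n^c(\bar{r}_n^c)$ over the compact convex set $\mathcal{R}$ intersected with the closed constraints (\ref{relax2})--(\ref{relax3}). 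Since $U_n^c$ is continuous, the maximum is attained by some point $(\bar{\boldsymbol{r}}^{stat},\bar{\boldsymbol{P}}^{stat},\bar{\boldsymbol{\mu}}^{stat}) \in \mathcal{R}$, and this point corresponds to a state-only randomized policy $\mathcal{P}^{stat}$ achieving $\sum_{n,c} U_n^c(\bar{r}_n^{c,stat})=\tilde U^{opt}$. Finally, one sets $R_n^{c,stat}(t)\equiv \bar{r}_n^{c,stat}$ (which is feasible because $\bar{r}_n^{c,stat}\in[0,R_{\max}]$, using concavity of $U_n^c$ and Jensen's inequality to eliminate any benefit of randomizing the admission), and the two inequalities in (\ref{eq.opt}) follow directly from the fact that $(\bar{\boldsymbol{r}}^{stat},\bar{\boldsymbol{P}}^{stat},\bar{\boldsymbol{\mu}}^{stat})$ satisfies (\ref{relax2}) and (\ref{relax3}).

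The main obstacle I expect is the rigorous reduction from arbitrary (history-dependent, non-stationary) feasible policies to state-only randomized ones; this is the step that requires Carath\'eodory's theorem plus a compactness/tightness argument on the conditional action distributions. Once that is in place, the rest---compactness, attainment of the maximum, and checking the listed constraints---is routine. Because this reduction is entirely a template result from Neely's framework and the state/action spaces here satisfy the standard boundedness hypotheses, I would expect to cite that framework and only sketch the adaptation to the present constraint set.
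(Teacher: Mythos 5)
Your proposal is correct and follows essentially the same route as the paper, which simply cites the standard stationary-randomized-policy existence result (Theorem 4.5 of Neely's monograph \cite{Nee10}) and omits the details; your sketch spells out exactly the Carath\'eodory/compactness machinery underlying that theorem, including the correct observation that concavity of $U_n^c$ lets one take $R_n^{c,stat}(t)\equiv \bar{r}_n^{c,stat}$ deterministically. No gap.
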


\begin{proof}
The proof argument is similar to the proof of \cite[Theorem 4.5]{Nee10}; hence, it is omitted for brevity.
\end{proof}

Lemma 2 in fact holds for many non-i.i.d. scenarios as well. Generalizations to other stationary processes, or even to non-stationary processes, can be found in \cite{Nee10, Nee10-2}.

It is worth noting that (\ref{eq.opt}) not only assures that the stationary control policy ${\cal P}^{stat}$ achieves the optimal utility for (\ref{eq.relax}), but also guarantees that the admitted data rate $R_n^{c, stat}(t)$ per slot $t$ is equal to the optimal time-average rate $\bar{r}_n^{c, stat}$ (due to the stationarity of $\{\boldsymbol{e}(t), \boldsymbol{S}(t)\}$ and ${\cal P}^{stat}$). This will be important to establish the following result:
\begin{proposition}
Suppose that conditions (\ref{eq.A1})--(\ref{eq.GV}) hold, and ${\boldsymbol{e}(t), \boldsymbol{S}(t)}$ is i.i.d. over slots. Let $\bar{r}_n^{c*}(T)$ be the time-average admitted rate vector achieved by the proposed algorithm up to time $T$, i.e., $\bar{r}_n^{c*}(T)= \frac{1}{T} \sum_{t=0}^{T-1} \mathbb{E}\{R_n^{c*}(t)\}$. Then
\[
   \lim_{T\rightarrow \infty} \inf \sum_{n,c} U_n^c(\bar{r}_n^{c*}(T)) \geq U^{opt} - \frac{B}{V}
\]
where the constant
\begin{equation}\label{eq.B}
 B= N^2 B_1 + N (B_2 + B_3),
\end{equation}
with
\begin{align}
    B_1 & = 2 d_{\max}^2 \mu_{\max}^2 + \frac{1}{2} R_{\max}^2 + 2d_{\max}\mu_{\max}R_{\max}, \notag \\
    B_2 & = \frac{1}{2}\max\{[\frac{P_{\max}}{\xi}+(1-\eta)\Gamma]^2, [-\xi e_{\max}+(1-\eta)\Gamma]^2\}, \notag \\
    B_3 & = \eta(1-\eta)\max\{(E_{\max}-\Gamma)^2, \Gamma^2\},  \notag
\end{align}
and $U^{opt}$ is the optimal value of (\ref{eq.prob}) under any feasible control algorithm, even if that relies on knowing future random realizations.
\end{proposition}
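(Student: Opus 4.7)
The plan is to apply the Lyapunov drift-plus-penalty framework to the perturbed Lyapunov function $L(t) := \frac{1}{2}\sum_{n,c}[Q_n^c(t)]^2 + \frac{1}{2}\sum_n [E_n(t)-\Gamma]^2$. The role of the perturbation $\Gamma$ is precisely to shift the natural drift center of the energy queue into the feasibility window $[0,E_{\max}]$ guaranteed by Proposition~1, while still letting $E_n(t)-\Gamma$ act as a stochastic Lagrange multiplier for the relaxed time-average energy balance (\ref{relax1}).

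First I would square the two queue recursions (\ref{eq.Q}) and (\ref{eq.E}), drop the $[\cdot]^+$ clipping, and bound the purely quadratic ``arrival minus departure'' residuals using $R_n^c\leq R_{\max}$, $\mu_{[n,m]}\leq\mu_{\max}$, $\sum_m P_{[n,m]}\leq P_{\max}$, $e_n\leq e_{\max}$ and $0\leq E_n\leq E_{\max}$. The constants so produced match exactly $N^2 B_1$ from the data queues and $N(B_2+B_3)$ from the energy queues, reproducing the $B$ in (\ref{eq.B}); in particular the $B_3$ term arises from the $(\eta^2-1)[E_n-\Gamma]^2$ contribution created by the imperfect storage factor $\eta<1$.

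Next I would subtract $V\,\mathbb{E}\{\sum_{n,c}U_n^c(R_n^c(t))\mid\boldsymbol{Q}(t),\boldsymbol{E}(t)\}$ from the conditional drift and rewrite the linear-in-control terms via the identity $\sum_{n,c}Q_n^c\sum_m(\mu_{[n,m]}^c-\mu_{[m,n]}^c)=\sum_{[n,m],c}(Q_n^c-Q_m^c)\mu_{[n,m]}^c$, so that they split into the three separable sub-objectives of the proposed algorithm: the per-node admission (\ref{eq.DA}), the network-wide power allocation (\ref{eq.PA}), and the MaxWeight routing step. The use of the $\Theta$-perturbed weight $W_{[n,m]}^c=[Q_n^c-Q_m^c-\Theta]^+$ only introduces an additive slack of order $\Theta\cdot d_{\max}\mu_{\max}$ per node, which is absorbed into $B_1$. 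Because the algorithm by construction minimizes the resulting drift-plus-penalty expression over all feasible instantaneous decisions, I then plug in the stationary randomized policy ${\cal P}^{stat}$ from Lemma~2 as a suboptimal choice: (\ref{relax3}) forces the expected data-queue inner product to be non-positive, (\ref{relax2}) keeps the expected energy inner product $(E_n(t)-\Gamma)\mathbb{E}[\xi\bar{e}_n-\bar{P}_n^{stat}/\xi]$ uniformly bounded by $B_2$-type constants, and the penalty term collapses to $-V\tilde{U}^{opt}$ by the first line of (\ref{eq.opt}).

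Finally I would telescope the resulting one-slot inequality from $t=0$ to $T-1$, invoke $L(T)\geq 0$ and $L(0)=0$, apply Jensen's inequality (valid by concavity of $U_n^c$) to pull the expectation inside $U_n^c(\cdot)$, pass $T\to\infty$, and use the relaxation inequality $\tilde{U}^{opt}\geq U^{opt}$ to conclude. The main obstacle will be the bookkeeping around the $\Theta$-perturbation and the asymmetric $\eta/\xi$ weighting in (\ref{eq.PA}): one has to verify that after the drift expansion the coefficient of $\sum_m P_{[n,m]}(t)$ is exactly $-2\eta(E_n(t)-\Gamma)/\xi$ so that the power step (\ref{eq.PA}) really is the per-slot minimizer, and that every ``extra'' slack produced by the truncation-and-shift in $W_{[n,m]}^c$ and by the $-(1-\eta)\Gamma$ offset in the energy drift collapses into the claimed $B_1, B_2, B_3$ rather than generating $V$- or $\Gamma$-dependent terms that would inflate the bound.
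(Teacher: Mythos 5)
Your proposal follows essentially the same route as the paper's proof: the perturbed Lyapunov function $L(t)=\frac{1}{2}\sum_{n,c}[Q_n^c(t)]^2+\frac{1}{2}\sum_n[E_n(t)-\Gamma]^2$, a drift-plus-penalty bound whose right-hand side is minimized by the algorithm's three sub-problems, substitution of the stationary policy from Lemma~2, telescoping, Jensen's inequality, and $\tilde{U}^{opt}\geq U^{opt}$. One bookkeeping correction: the $-(1-\eta^2)[E_n(t)-\Gamma]^2$ term in the energy drift is non-positive and is simply dropped, and $B_3$ instead arises from bounding the residual cross term $\eta(E_n(t)-\Gamma)\bigl[\xi\bar{e}_n-\tfrac{1}{\xi}\sum_{m}\mathbb{E}\{P_{[n,m]}^{stat}(t)\}-(1-\eta)\Gamma\bigr]$, which does not vanish because (\ref{relax2}) only gives a two-sided bound; also the coefficient of $\sum_m P_{[n,m]}(t)$ in the drift is $-\tfrac{\eta}{\xi}(E_n(t)-\Gamma)$, without the factor of $2$.
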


\begin{proof}
See Appendix C.
\end{proof}

\begin{remark}
Proposition 2 asserts that the proposed algorithm asymptotically yields a time-average utility with an optimality gap smaller than $\frac{B}{V}$. The proposed scheme is in fact a modified version of the queue-length based stochastic optimization scheme (e.g., the ESA in \cite{LHua13}), where the ``perturbed'' queue lengths play the role of ``stochastic'' Lagrange multipliers with a dual-subgradient solver to the problem of interest. The gap $N^2 B_1/V$ is inherited from the underlying stochastic subgradient method. On the other hand, the gap $N B_2/V$ is due to the combined effect of energy-queue perturbation and battery imperfections, while the gap $N B_3/V$ is incurred by the battery degeneration.
\end{remark}

\subsection{Main Theorem}

Based on Propositions 1 and 2, it is now possible to arrive at our main result.
\begin{theorem}\label{Them:main}
Suppose that conditions (\ref{eq.A1})--(\ref{eq.GV}) hold and ${\boldsymbol{e}(t), \boldsymbol{S}(t)}$ is i.i.d. over slots. Then the proposed algorithm yields a feasible dynamic control scheme for~\eqref{eq.prob}, which has an optimality gap $\frac{B}{V}$; i.e.,
\[
    U^{opt} \geq \lim_{T\rightarrow \infty} \inf \sum_{n,c} U_n^c(\bar{r}_n^{c*}(T)) \geq U^{opt} - \frac{B}{V}
\]
where $\bar{r}_n^{c*}(T)= \frac{1}{T} \sum_{t=0}^{T-1} \mathbb{E}\{R_n^{c*}(t)\}$ and $B$ is given by (\ref{eq.B}).
\end{theorem}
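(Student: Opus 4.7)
The plan is to assemble Theorem~\ref{Them:main} from the two pieces already in hand: Proposition~1 supplies feasibility of the online control decisions produced by the algorithm, and Proposition~2 supplies the matching lower bound on the achievable time-average utility. So at a high level my proof is essentially a bookkeeping argument that verifies each constraint of \eqref{eq.prob} is met and then quotes the two-sided inequality.

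First I would verify feasibility constraint by constraint. Constraints \eqref{eq.Rn} and \eqref{eq.Pnm} hold because the data-admission subproblem \eqref{eq.DA} and the power-allocation subproblem \eqref{eq.PA} explicitly enforce them. Constraints \eqref{eq.mu} and \eqref{eq.muc} hold by the way $\mu_{[n,m]}^*(t)$ and the MaxWeight routing/scheduling step are defined, while \eqref{eq.Q} and \eqref{eq.E} are automatic because the queues are literally updated according to those dynamic equations. This leaves the two nontrivial constraints, namely the energy-availability constraint \eqref{eq.EA} and the battery-capacity constraint \eqref{eq.Emax}. Both are delivered by Proposition~1: part~(i) states that $\sum_{m\in{\cal N}_n^o}P_{[n,m]}^*(t)=0$ whenever $\xi\eta E_n(t)<P_{\max}$, which, combined with \eqref{eq.Pnm}, implies $\sum_{m\in{\cal N}_n^o}P_{[n,m]}^*(t)\le \xi\eta E_n(t)$ in all remaining cases, yielding \eqref{eq.EA}; and part~(ii) gives $0\le E_n(t)\le E_{\max}$ directly.

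Having established that the algorithm produces a policy that is feasible for \eqref{eq.prob} for every sample path, the upper inequality $U^{opt}\ge \liminf_T\sum_{n,c}U_n^c(\bar r_n^{c*}(T))$ is immediate from the definition of $U^{opt}$ as the supremum of $\sum_{n,c}U_n^c(\bar r_n^c)$ over all feasible control policies (including non-causal ones): any feasible policy's long-run utility is at most $U^{opt}$, and Jensen's inequality together with concavity of $U_n^c$ may be invoked on the expected time-average rates $\bar r_n^{c*}(T)$ if one wants to be pedantic, but since the objective is already in terms of $\bar r_n^{c*}(T)$ this is not needed. The lower inequality $\liminf_T\sum_{n,c}U_n^c(\bar r_n^{c*}(T))\ge U^{opt}-B/V$ is precisely the content of Proposition~2, invoked under the i.i.d.\ assumption on $\{\boldsymbol{e}(t),\boldsymbol{S}(t)\}$ and with $(V,\Gamma)$ chosen in \eqref{eq.GV}; the constant $B$ given by \eqref{eq.B} carries through unchanged.

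There is essentially no hard step left: all analytical heavy lifting (the sample-path energy-queue control in Proposition~1 and the Lyapunov drift-plus-penalty bound in Proposition~2) has already been done. The only thing to be careful about is that the hypotheses of both propositions are simultaneously satisfied, so I would open the proof by checking that assuming \eqref{eq.A1}--\eqref{eq.GV} and the i.i.d.\ state process is exactly what Propositions~1 and~2 require, and then conclude by chaining the two inequalities.
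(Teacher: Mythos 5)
Your proposal is correct and matches the paper's route exactly: the paper offers no separate proof of Theorem~1 beyond the remark that it follows from Propositions~1 and~2, and your constraint-by-constraint feasibility check (with \eqref{eq.EA} and \eqref{eq.Emax} supplied by Proposition~1) plus the chaining of the definitional upper bound with Proposition~2's lower bound is precisely the intended bookkeeping.
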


\begin{remark}
Interesting comments on the minimum optimality gap with the proposed algorithm are in order.
\begin{enumerate}
\item [i)] When $\eta=1$ (i.e., without energy degeneration), the optimality gap (regret) between the proposed algorithm and the offline optimal scheme reduces to\footnote{Note that $\xi e_{\max} \leq \frac{P_{\max}}{\xi}$ for $\eta =1$ in (\ref{eq.A1}).}
	\[
		\frac{B}{V}=\frac{N^2B_1+\frac{N}{2}(\frac{P_{\max}}{\xi})^2}{V}.
	\]
	Clearly, the minimum optimality gap is given by $B/V^{\max}$, which vanishes as $V_{\max} \rightarrow \infty$. By (\ref{eq.GV3}), such an asymptotic optimality can be achieved when we have a very large battery capacity $E^{\max} \rightarrow \infty$.

\item [ii)] When $\eta\in(0,1)$, note that the constants $B_2$ and $B_3$ are in fact functions of $\Gamma$, whereas the minimum and maximum values of $\Gamma$ in (\ref{eq.GV1}) and (\ref{eq.GV2}) depend on $V$. For a given $V_{\max}$, the minimum optimality gap, $G^{\min}(V_{\max})$, can be obtained by solving following problem:
\begin{equation}%\label{eq.gap}
%\begin{split}
    \min_{(V, \Gamma)} \;N^2 \frac{B_1}{V}+ N [\frac{B_2(\Gamma)}{V} + \frac{B_2(\Gamma)}{V}], ~~~\text{s. t.} ~~(\ref{eq.GV}). \notag
%\end{split}
\end{equation}
For $V \geq 0$, we know that the quadratic-over-linear functions $\frac{[\frac{P_{\max}}{\xi}+(1-\eta)\Gamma]^2}{V}$ and $\frac{[-\xi e_{\max}+(1-\eta)\Gamma]^2}{V}$ are jointly convex in $V$ and $\Gamma$ \cite{convex}. As a point-wise maximum of these two convex function, $\frac{B_2(\Gamma)}{V}$ is convex too \cite{convex}. Similarly, $\frac{B_3(\Gamma)}{V}$ is convex, while $\frac{B_1}{V}$ is clearly convex in $V$. Since the objective function is convex and the constraints in (\ref{eq.GV}) are linear, the above problem is a convex program which can be efficiently solved by general interior-point method. Note that $G^{\min}(V_{\max})$ may not decrease as $V_{\max}$ increases, or, equivalently, $E^{\max}$ increases; see discussions in \cite{Qin15}. This makes sense intuitively because for a large battery capacity, the dissipation loss due to battery imperfections will also be enlarged. In other words, asymptotic optimality of the proposed algorithm may not be achieved in the presence of battery degeneration. The smallest possible optimality gap can be numerically computed by one dimensional search over $G^{\min}(V_{\max})$ with respect to $V_{\max}$.
\end{enumerate}
\end{remark}

\begin{remark}
The results in Theorem 1 can be generalized to the more general non-i.i.d. case where the the energy state $\boldsymbol{e}(t)$ and channel state $\boldsymbol{S}(t)$ both evolve according to some finite-state irreducible, aperiodic Markov chains. Note that the feasibility of the proposed algorithm is a sample path result which does not depend on the assumption on the random processes per Remark 2; i.e., it holds also for this non-i.i.d. case. On the other hand, the performance guarantee (i.e., optimality gap result) can be obtained by applying the so-called delayed Lyapunov drift technique, i.e., the method of analyzing regenerative cycles of the Markov random processes; see \cite[Theorem 3]{LHua13} and \cite[Theorem 2]{Qin15}.
\end{remark}

% ----------------------------
\section{Numerical Tests}\label{sec:test}
In this section, numerical results are provided to evaluate the proposed algorithm.
A simple network is considered in Fig. 1, where the nodes 1-4 collect data and send data to the sink node 7 through relay nodes 5 and 6 \cite{LHua13}.

In simulations, we assume imperfect batteries at nodes, with storage efficiency  $\eta < 1$, and  (dis-)charging efficiency $\xi \leq 1$.
Table \ref{tab: param} lists the values for $d_{\max}$ (the  maximum in- and out-degree for nodes in the network), $R_{\max}$ (the maximum packets that can be newly admitted), $P_{\max}$ (the peak power), $\mu_{\max}$ (the maximum rate over all the links) and $E_{\max}$ (the battery capacity).
The utility function is selected as:
$\sum_{n,c} U_n^c(\bar{r}_n^c)=\ln(1+\bar{r}_1^7)+\ln(1+\bar{r}_2^7) +\ln(1+\bar{r}_3^7) +\ln(1+\bar{r}_4^7)$.
Suppose that all the links are independent with each other, implying $\delta_2=0$. The link state can be either good or bad with equal probability. One unit of power can deliver two packets when the link state is good, while it can be only used to transmit one packet upon bad link state.
We also assume that the harvested energy $e_n(t)$ is i.i.d. for each node; $e_n(t)$ is either $e_{\max}$ or 0 with equal probability.
As a result, we  have:
$g_{\max}=1$, $\delta_1=2$ and $\Theta=d_{\max}\mu_{\max}+R_{\max}=7$.
To satisfy the constraint (\ref{eq.A1}), we set $e_{\max}=5$, unless otherwise specified.
Each simulation is run for 1200 slots and each result is obtained by averaging over 10 independent runs.

Fig. 2 depicts the achieved utility for the proposed algorithm with different $\Gamma$, when the weight parameter $V=30$, $\xi=1$ and $\eta = 0.98, 0.97, 0.96$. It is shown that the maximum utility is achieved when the ``queue perturbation" parameter $\Gamma$ is set to  $\Gamma^{\min}$ in this case.
This is because power allocation policy (17)  tends to save energy if a large $\Gamma$ is chosen, leading to a large data queue size $Q_n(t)$.
 On the other hand, data admission policy (\ref{eq.DA}) dictates that $R_n(t)=\frac{V}{Q_n(t)}-1$, if $0 \leq \frac{V}{Q_n(t)}-1 \leq R_{\max}$, $n=1,2,3,4$. This implies that large $Q_n(t)$ reduces rate utility.
 As a result, the utility decreases as $\Gamma$ grows. For this reason, we set $\Gamma = \Gamma^{\min}$ unless otherwise specified in all the subsequent simulations.
  Note that when $\eta=0.96$ and $\Gamma=100$, we often have $P_{[n,m]}(t)=0, \forall n, \forall t$; i.e., no data are allowed to be transmitted over links. In this case, to maintain the queue stability, network utility becomes almost 0.

%The simulations above are conducted when (dis-)charging efficiency $\xi$ is set to $1$. Considering that we actually have $\xi \leq 1$,
Fig. \ref{fig: weight} shows the network utility of the proposed algorithm with different weight parameter $V$, when $\xi=1$ and $\eta = 0.98, 0.97, 0.96$.
It is observed that the network utility decreases  after reaching the maximum value for a given $\eta$.
%Both utility and stability is taken into consideration in Lyapunov technique, where $V$ is the weight parameter for utility.
Clearly, the proposed algorithm does not perform well with a small $V$.
On the other hand, a large $V$ is not a good choice in the presence of battery imperfections.
To see it, we use $\Gamma^{\min}$ in (14) and set  $\mu_{[n,m]}(t) = S_{[n,m]}(t) P_{[n,m]}(t)$ in (17). ($S_{[n,m]}(t)$ is 2 with good link state, while it is 1 when the link state is bad.) Then power allocation problem (17) is a linear program, where the weight for $P_{[n,m]}(t)$ is given by $W_{[n,m]}(t)S_{[n,m]}(t)+
\eta E_n(t)-P_{\max}-\delta_1 g_{\max} V$.
Clearly, we have $P_{[n,m]}(t)=0$,  if its weight is less than zero.
%the coefficient for a link decides whether data is able to be transmitted over the link.
Therefore, a large $V$ implies that high energy is required to render $P_{[n,m]}(t)>0$ for transmission. However, it is hard to maintain high energy in the presence of energy degeneration.
As a result, when $\eta=0.96$ and $V=50$, utility is almost 0.
The results are consist with our discussion in Remark 4. The optimality gap with the proposed scheme does not monotonically decrease with $V$ any more in the presence of battery imperfections. Determination of the best $V$ for the algorithm may refer to the solution of the problem formulated in Remark 4-ii).

%Note that the energy queue evolution in (7) reveals that $E_n(t)$ is a monotonically increasing function of $\xi$. In addition, a node $n$ with a large $E_n(t)$ tends to allocate sufficient power for transmission due to (17), resulting in a large rate utility.
%Fig. \ref{fig: eta} shows the utility of the proposed algorithm with different storage efficiency $\eta$.
%Clearly, utility increases as $\eta$ grows for a given $V$.
%This is similar to the relationship between utility and $\xi$.
%Intuitively speaking, a low energy storage efficiency implies low waste of harvested energy.
%The relationship between $E_n(t)$ and $\eta$ is linear because of (7); i.e., small difference in $\eta$ will cause large change in $E_n(t)$ when $E_{\max}=160$.
%On the other hand, high available energy makes high utility possible.
%Note that both $\xi$ and $\eta$ affect utility through energy storage. However small difference in $\eta$ will cause large change in $E_n(t)$, which means $\eta$ has more effect on utility than $\xi$.

%We need to note that with different $\eta$, utility is maximized with different $V$. We can obtain $V$ by  offline experiments.
%%%%%%%%%%%%%%%%%%%%%%%%%%%%%%%%%%%         reason

\begin{table}[t]\addtolength{\tabcolsep}{1pt}
\centering
\caption{Parameters Configuration.} \label{tab: param}
    \begin{tabular}{ c|c|c|c|c}
    \hline
$d_{\max}$ &$R_{\max}$ & $P_{\max}$   &$\mu_{\max}$   & $E_{\max}$ \\ \hline
 2         & 3         &  2           & 2  	              & 160        \\ \hline
    \end{tabular}

\end{table}

\begin{figure}
\centering
\includegraphics[width=3.5in,height=1.7in]{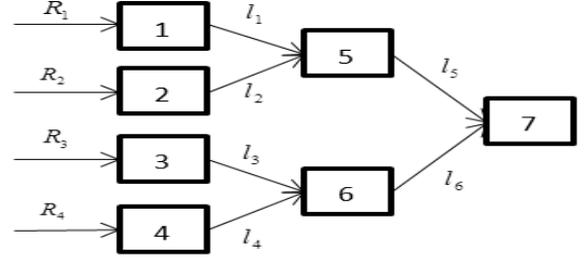}
\caption{Data collection network.}
\label{fig: network}
\end{figure}

\begin{figure}
\centering
\includegraphics[width=3.5in,height=2in]{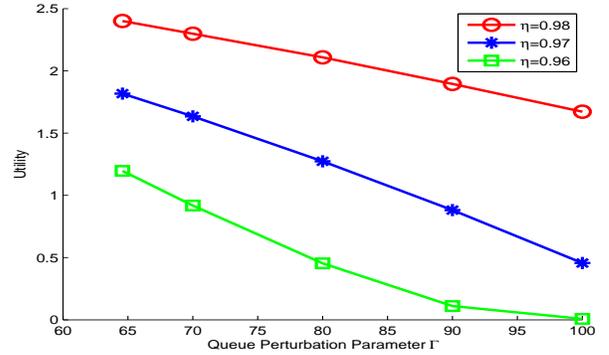}
\caption{Utility versus  ``queue perturbation" parameter $\Gamma$.}
\label{fig: gamma}
\end{figure}

\begin{figure}
\centering
\includegraphics[width=3.5in,height=2in]{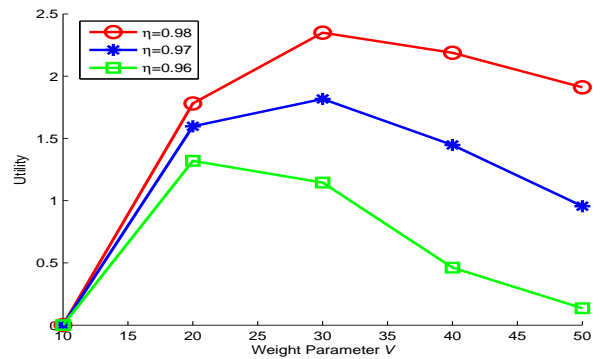}
\caption{Utility versus weight parameter $V$.}
\label{fig: weight}
\end{figure}

\begin{figure}
\centering
\subfigure{
\label{fig:subfig:a} %% label for first subfigure
\includegraphics[width=3.5in,height=2in]{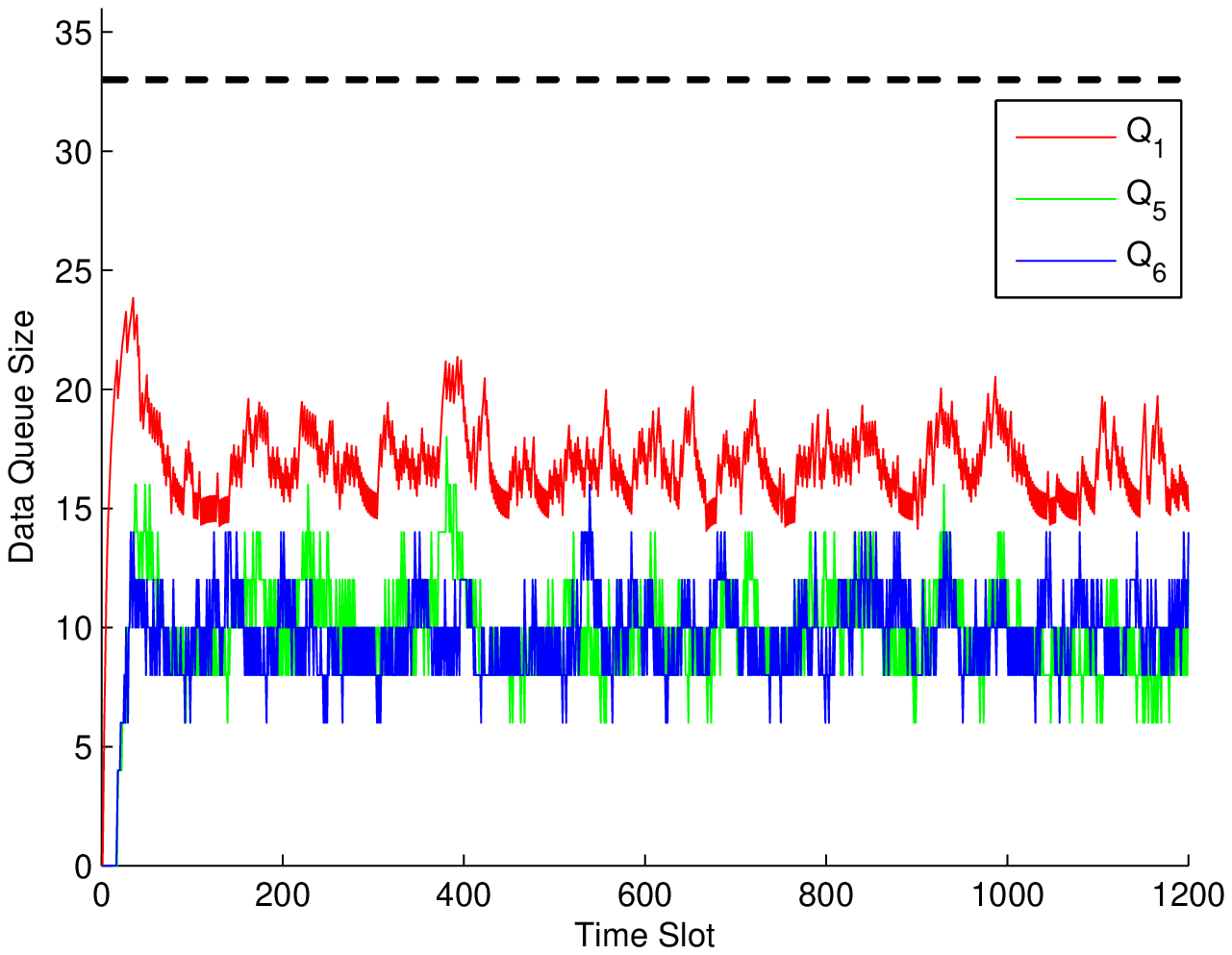}}
%%\hspace{1in}
\subfigure{
\label{fig:subfig:b} %% label for second subfigure
\includegraphics[width=3.5in,height=2in]{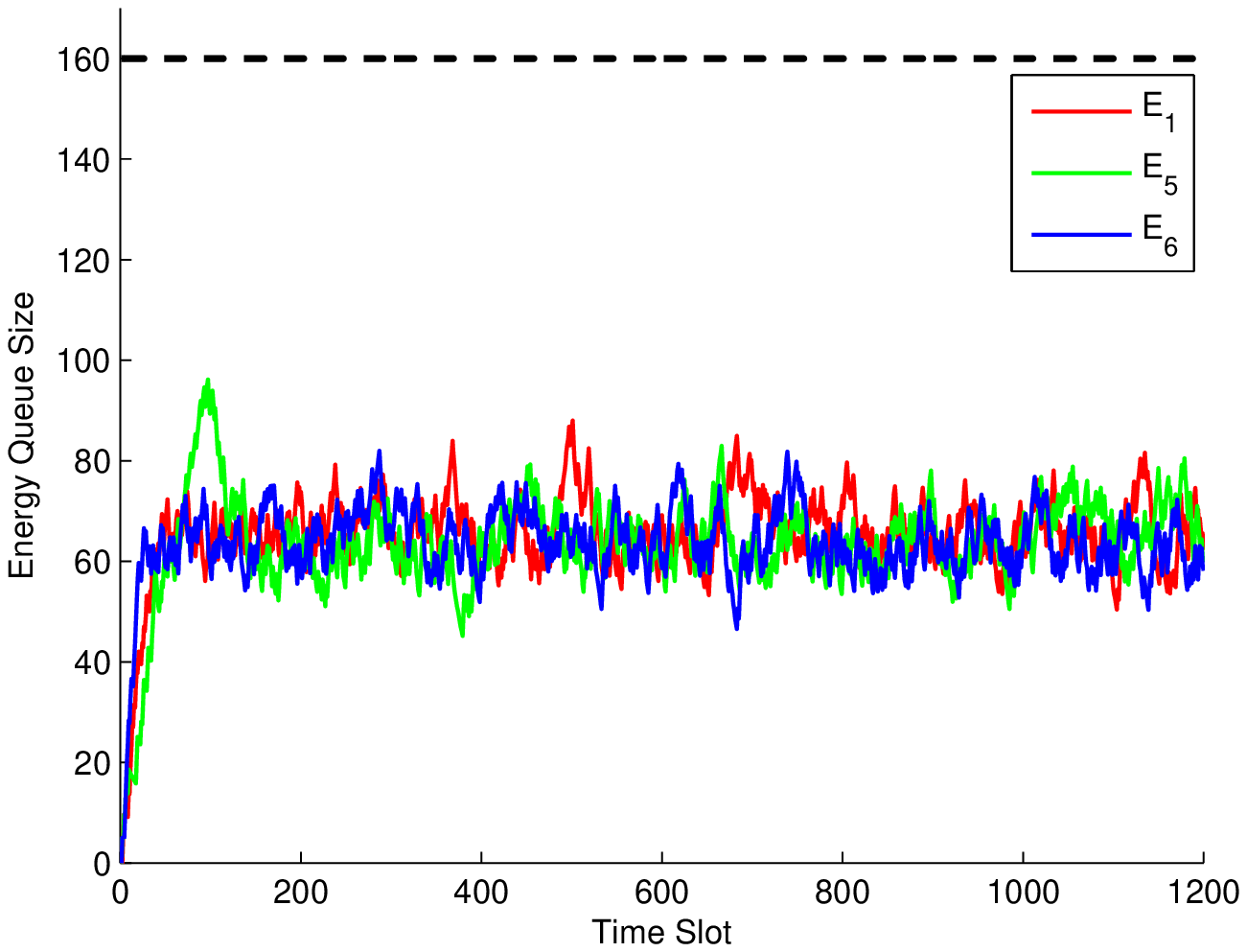}}
\caption{Sample path processes.}
\label{fig:samplePro} %% label for entire figure
\end{figure}

%\begin{figure}
%\centering
%\subfigure{
%\label{fig:subfig:a} %% label for first subfigure
%\includegraphics[width=3.5in,height=2in]{dataPro.eps}}
%%%\hspace{1in}
%\subfigure{
%\label{fig:subfig:b} %% label for second subfigure
%\includegraphics[width=3.5in,height=2in]{energyPro.eps}}
%\caption{Sample path processes.}
%\label{fig:samplePro} %% label for entire figure
%\end{figure}

\begin{figure}
\centering
\subfigure{
\label{fig:results:a} %% label for first subfigure
\includegraphics[width=3.5in,height=2in]{}}
%%\hspace{1in}
%\subfigure{
%\label{fig:subfig:b} %% label for second subfigure
%\includegraphics[width=3.5in,height=2in]{queue.eps}}
%\subfigure{
%\label{fig:results:b} %% label for second subfigure
%\includegraphics[width=3.5in,height=2in]{}}
\caption{Comparison of the proposed algorithm, ESA and the greedy algorithm.}
\label{fig:results} %% label for entire figure
\end{figure}

\begin{figure}
\centering
\includegraphics[width=3.5in,height=2in]{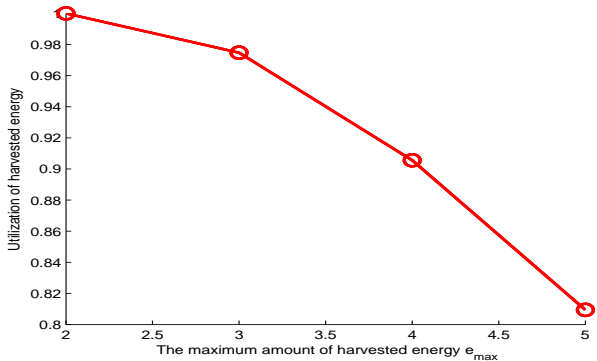}
\caption{Utilization of harvested energy for ESA.}
\label{fig: uti}
\end{figure}

The data and energy queue processes with the proposed algorithm are plotted in Fig. \ref{fig:samplePro}, when $\xi=1$, $\eta=0.98$ and $V=30$.
Clearly, the three data queue lines evolve within the region [0, $g_{\max}V+R_{\max}$]; i.e., the stability of the data network is maintained.
%$0 \leq Q_n(t) \leq g_{\max}V+R_{\max}, \forall t$.
%It is observed that the data queues quickly become stable  after short initial stages.
%This occurs after the energy queue, e.g., $E_1(t)$ becomes large enough to transmit data from node 1 to node 5.
%Since $E_{\max}=5$ and $P_{\max}=2$,
%On the other hand, $\xi \eta E_n(t)<P_{\max}$ mostly happens before a node n start to harvest any energy. It is impossible for the node to deliver data with no energy.
In addition, all energy queues satisfy $0 \leq E_n(t) \leq E_{\max}, \forall n, \forall t$.
This clearly corroborates the feasibility of the proposed scheme, as stated in Proposition 1.

Two baseline schemes are chosen to gauge the performance of the proposed algorithm.
 The first  baseline scheme is ESA in \cite{LHua13}, where perfect batteries are assumed and an EH admission is performed to ensure finite energy queues.
 %The relevant arguments of ESA are selected to be the same as those with the proposed algorithm.
 %, except its $\Gamma=\delta_1g_{\max}+P_{\max}$.
 The second scheme is a heuristic greedy algorithm, where a node with the largest data queue backlog is first allowed to %find one link with the best state to
 transmit data over its best outgoing links (with best link quality). The node with second largest data queue  then selects a best link to transmit, among all its outgoing links that do not conflict with the existing links.
 %If the link and any existing link transmit data to the same node, it conflicts with the existing links.
 (A link conflicts with the existing links if it shares the same transmit or receive node with any existing links.)
 This continues until no nodes or no links can be selected. If a link $[n,m]$ is selected, the power
 $P_{[n,m]}(t)$ is set as the maximum value that satisfies both the peak power constraint (2) and the energy availability constraint (6); $\mu_{[n,m]}(t)$ is then determined correspondingly. Finally, the greedy algorithm sets $R_n(t)=R_{\max}$ if $Q_n(t)=0$, while $R_n(t)=\mu_{[n,m]}(t)$ if $Q_n(t) \neq 0$.
%In the greedy algorithm, we have $Q_n(t) \leq R_{\max}$ for nodes 1-4 by the data admission policy. Nodes 5 and 6 also
It is clear that the nodes have finite data queue backlogs; i.e., the system is stable. The feasibility of the greedy algorithm is also guaranteed since the energy availability constraint (6) is taken into account in power allocation policy.

Fig. \ref{fig:results} compares the performance of the three algorithms with different $e_{\max}$, when $\xi=0.95$, $\eta=0.98$ and $V=30$.
%With perfect batteries, the utility of the proposed algorithm and ESA converge to the optimal value as $V$ grows \cite{LHua13}. %The optimality gap of perfect batteries is $\frac{B}{V}$, where $B$ is a constant given in \cite{LHua13} and
%Fig. \ref{fig:results:a} shows
It is shown that the proposed algorithm evidently outperforms the ESA and the greedy algorithm for any given $e_{\max}$.
Since the greedy algorithm in fact schedules the links in a time division multi-access (TDMA) manner,
 %guarantees that node $n$ has at most one receiving link, either node 1 or node 2 transmits data to node 5, and either node 3 or node 4 transmits data to node 6 in a slot.
%As a result, the utility for ESA is low. Besides, the required energy for ESA is low, leading to its small performance change for different $e_{max}$.
the resultant utility is low in general. In addition, the required energy for implementation of the greedy policy is actually also low; hence, its performance changes only slightly for different $e_{\max}$.
The proposed algorithm achieves higher utility than ESA due to two reasons.
% the maximum utility of the proposed algorithm is achieved when $V=30$. The energy supply is abundant in this situation, since the harvestable energy is fully capitalized on.
The first reason is that ESA cannot make use of all available energy because of its EH admission mechanism, while the proposed algorithm harvested all available energy.
%Specifically, when $V=30$, the EH admission mechanism makes a node stop harvesting energy when its energy reaches $\delta_1g_{\max}V+P_{\max}+e_{\max}=67$ according to \cite{LHua13}.
The utilization of available energy (the ratio of actually harvested energy to available energy) for ESA is depicted in Fig. \ref{fig: uti}.
For ESA, node $n$  stops harvesting energy when $E_n(t)$ energy reaches $\delta_1g_{\max}V+P_{\max}$ \cite{LHua13}.
 %Note that the node will start to harvest again when $E_n(t)<\delta_1g_{\max}V+P_{\max}$.
Clearly, given a larger $e_{\max}$ (i.e., renewable energy is abundant), ESA  might  waste more available energy, leading to a significant performance loss.
%However, the proposed algorithm can fully capitalize on the the harvestable energy.
%Note that with perfect batteries,
%ESA converges to the optimal value as $V$ grows \cite{LHua13}.
%$V_{\max}$ is selected, which means $\delta_1g_{\max}V_{\max}+P_{\max}+e_{\max}$ is close to $E_{\max}$; i.e., the wasted harvestable energy is  negligible.
On the other hand, ESA also has a performance loss for small $e_{\max}$ case since it does not take into account the battery imperfections.
For instance, when $e_{\max}=2$,  the utilization of available energy for ESA is 100\%; yet, the
utility of the proposed algorithm is still 17.2\% larger than that of ESA in this case.
%Power allocation policy (17) shows the weight for $P_{[n,m]}(t)$ is $W_{[n,m]}(t)S_{[n,m]}(t)+\frac{\eta}{\xi} E_n(t)-\frac{P_{\max}}{\xi^2}-\delta_1 g_{\max} V$ for the proposed algorithm,
%while it is $W_{[n,m]}(t)S_{[n,m]}(t)+E_n(t)-P_{\max}-\delta_1 g_{\max} V$ for ESA.
%Mostly, we have $P_{\max} \ll E_n(t)$, which implies that the difference between $\frac{P_{\max}}{\xi^2}$ and $P_{\max}$ is negligible.
%With $\frac{\eta}{\xi} \geq 1$, $P_{[n,m]}(t)$ is more likely to be a positive number for the proposed algorithm, thus leading to its better performance.

%Recall that the coefficient for link $[n,m]$ is $2W_{[n,m]}(t)+\eta E_n(t)-2-2V$ when $\xi=1$ with good link state.

\section{Conclusions}
In this paper, a dynamic resource allocation task was considered for general EH wireless networks. Taking into account imperfect finite-capacity energy storage devices, a stochastic optimization was formulated to maximize the long-term utility subject to the energy availability constraints. Capitalizing on Lyapunov optimization technique, an online control algorithm  was proposed to make data admission, power allocation and routing decisions, without requiring any statistical knowledge of channel, data-traffic, and EH processes. It was shown that the proposed
algorithm can efficiently utilize the harvested energy to provide a feasible and asymptotically near-optimal control solution for general EH networks.

%In this paper, real-time energy management and robust transmit beamforming designs were developed for smart-grid powered CoMP transmissions.Leveraging the distributed storage units, base stations are able to trade shortage or surplus energy with the main grid to maintain the operations. Taking into account the uncertainties of both channel estimation and harvested renewables, a stochastic optimization problem was formulated to minimize the expected transaction cost while satisfying the worst-case QoS requirements. The resulting infinite horizon problem makes batch schemes invalid. Inspired by Lyapunov optimization techniques, a virtual-queue based online algorithm is proposed to obtain feasible decisions `on-the-fly' by relaxing the time-coupling constraints capturing the storage dynamics. Interestingly, the novel method was proved to be asymptotically optimal even without knowing the probability density function of the underlying stochastic processes. Extensive numerical tests justify the effectiveness and merits of the proposed approach.

\section*{Appendix}

\subsection{Proof of Lemma 1}

Using the perturbation weight $\Theta = R_{\max} +d_{\max} \mu_{\max}$ in queue-backpressure, it was shown in \cite[Theorem 2]{LHua13} that the data queue sizes satisfy: $0 \leq Q_n^c(t) \leq g_{\max}V + R_{\max}$, $\forall n,c$, $\forall t$. Note that this is a sample path result which holds for arbitrary $\boldsymbol{e}(t)$ and $\boldsymbol{S}(t)$ processes.

Since all the data queues are upper-bounded by $g_{\max}V + R_{\max}$, we have: $W_{[n,m]}(t) \leq [g_{\max}V -d_{\max}\mu_{\max}]^+$, $\forall [n,m]$, $\forall t$. Consider the following two cases:
\begin{itemize}
\item[c1)] If $E_n(t) < \Gamma-\frac{\xi}{\eta} \delta_1 g_{\max} V$, suppose $\sum_{m \in {\cal N}_n^o} P_{[n,m]}^*(t)> 0$; i.e., there exists a certain $P_{[n,\check{m}]}^*(t)>0$ with $\check{m} \in  {\cal N}_n^o$. We can create a new power vector $\boldsymbol{P}'$ by setting only $P_{[n,\check{m}]}'=0$ and all other components remain the same as in $\boldsymbol{P}^*(t)$. Let $G(\boldsymbol{P})$ denote the objective function of (\ref{eq.PA}). It readily follows that
    \begin{align}
        & G(\boldsymbol{P}^*(t)) -  G(\boldsymbol{P}') \notag \\
        &  = \sum_n \sum_{m \in {\cal N}_n^o} W_{[n,m]}(t)[\mu_{[n,m]}(\boldsymbol{P}^*(t)) - \mu_{[n,m]}(\boldsymbol{P}')] \notag\\
        &  ~~~~~ + \frac{\eta}{\xi} (E_n(t)-\Gamma) P_{[n,\check{m}]}^*(t) \notag \\
        &  \leq W_{[n,\check{m}]}(t)[\mu_{[n,\check{m}]}(\boldsymbol{P}^*(t)) - \mu_{[n,\check{m}]}(\boldsymbol{P}')] \notag \\
        &  ~~~~~ + \frac{\eta}{\xi} (E_n(t)-\Gamma) P_{[n,\check{m}]}^*(t) \notag \\
        &  < [g_{\max}V -d_{\max}\mu_{\max}]^+ \delta_1 P_{[n,\check{m}]}^*(t) - \delta_1 g_{\max} V P_{[n,\check{m}]}^*(t) \notag \\
        &  < 0 \notag
    \end{align}
    where the first inequality is due to Property 1-i), and the second inequality is due to Property 1-ii), as well as the bounds $W_{[n,m]}(t) \leq [g_{\max}V -d_{\max}\mu_{\max}]^+$ and $E_n(t) < \Gamma-\frac{\xi}{\eta} \delta_1 g_{\max} V$. This shows that $\boldsymbol{P}^*(t)$ cannot have been the optimal power allocation for (\ref{eq.PA}), leading to a contradiction. Hence, part i) in the lemma must hold.

\item[c2)] If $E_n(t) > \Gamma+ \frac{\xi}{\eta} \delta_2 g_{\max} V$, suppose $\sum_{m \in {\cal N}_n^o} P_{[n,m]}^*(t)<P_{\max}$. Let $\Delta P := P_{\max} - \sum_{m \in {\cal N}_n^o} P_{[n,m]}^*(t)$. We can create a new power vector $\boldsymbol{P}'$ by setting $P_{[n,m]}'=P_{[n,m]}^*(t)+\frac{\Delta P}{|{\cal N}_n^o|}$, $\forall m \in {\cal N}_n^o$, and $P_{[n',m]}'=P_{[n',m]}^*(t)$, $\forall n' \neq n$. It readily follows that
    \begin{align}
        & G(\boldsymbol{P}^*(t)) -  G(\boldsymbol{P}') \notag \\
        &  = \sum_n \sum_{m \in {\cal N}_n^o} W_{[n,m]}(t)[\mu_{[n,m]}(\boldsymbol{P}^*(t)) - \mu_{[n,m]}(\boldsymbol{P}')] \notag\\
        &  ~~~~~ - \frac{\eta}{\xi} (E_n(t)-\Gamma) \Delta P \notag \\
        &  \leq \sum_{n' \neq n} \sum_{m \in {\cal N}_{n'}^o} W_{[n',m]}(t)[\mu_{[n',m]}(\boldsymbol{P}^*(t)) - \mu_{[n',m]}(\boldsymbol{P}')] \notag \\
        &  ~~~~~ - \frac{\eta}{\xi} (E_n(t)-\Gamma) \Delta P \notag \\
        &  < [g_{\max}V -d_{\max}\mu_{\max}]^+ \delta_2 \Delta P - \delta_2 g_{\max} V \Delta P \notag \\
        &  < 0 \notag
    \end{align}
    where the first inequality is due to Property 2-i), and the second inequality is due to Property 2-ii), as well as the bounds $W_{[n,m]}(t) \leq [g_{\max}V -d_{\max}\mu_{\max}]^+$ and $E_n(t) > \Gamma+ \frac{\xi}{\eta} \delta_2 g_{\max} V$. This leads to a contradiction. Hence, part ii) in the lemma must also hold.
\end{itemize}

\subsection{Proof of Proposition 1}

Recall that $\frac{P_{\max}}{\xi \eta} = \Gamma^{\min} - \frac{\xi}{\eta} \delta_1 g_{\max} V$ by the definition of $\Gamma^{\min}$ in (\ref{eq.GV1}). When $E_n(t) < \frac{P_{\max}}{\xi \eta }$, we in fact have $E_n(t) < \Gamma^{\min} - \frac{\xi}{\eta} \delta_1 g_{\max} V \leq \Gamma - \frac{\xi}{\eta} \delta_1 g_{\max} V$. Then by Lemma 1, it readily follows $\sum_{m \in {\cal N}_n^o} P_{[n,m]}^*(t)= 0$; i.e., part i) holds.

The proof for part ii) proceeds by induction. First, recall that $E_n(0) =0$, $\forall n$, and suppose that it holds $E_n(t) \in [0, E_{\max}]$, $\forall n$, at slot $t$. We will show the bounds hold for $E_n(t+1)$, $\forall n$, as well in subsequent instances.

%By $C_i(t+1)=\eta C_i(t) +P_{b,i}^*(t)$, we have:
%\begin{equation}\label{eq.Cit}
%\begin{split}
%    C_i(t) = \eta^{t-nT} C_i(nT) + \sum_{\tau=nT}^{t-1} [\eta^{t-1-\tau} P_{b,i}^*(\tau)],~~~ \\
%     \forall t =nT+1, \ldots, (n+1)T.
%\end{split}
%\end{equation}

Note that by the definitions of $\Gamma^{\min}$ and $\Gamma^{\max}$ in (\ref{eq.GV1})--(\ref{eq.GV2}), we have $0 \leq \Gamma-\frac{\xi}{\eta} \delta_1 g_{\max} V
%< \Gamma+ \frac{\xi}{\eta} \delta_2 g_{\max} V
\leq E_{\max}$. We then consider the following three cases:
\begin{itemize}
\item[c1)] If $E_n(t) \in [0, \Gamma-\frac{\xi}{\eta} \delta_1 g_{\max} V)$, then it follows from Lemma 1 that $\sum_{m \in {\cal N}_n^o} P_{[n,m]}^*(t)= 0$. From (\ref{eq.E}), we have
    \begin{itemize}
    \item[i)] $E_n(t+1) = \eta E_n(t) + \xi e_n(t) \geq 0$;
    \item[ii)] $E_n(t+1) \leq \eta (\Gamma-\frac{\xi}{\eta} \delta_1 g_{\max} V) + \xi e_{\max} \leq E_{\max}-\xi(\delta_1+\delta_2) g_{\max} V \leq E_{\max}$, due to $V\geq 0$, $\Gamma \leq \Gamma^{\max}$, and the definition of $\Gamma^{\max}$ in (\ref{eq.GV2}).
    \end{itemize}

\item[c2)] If $E_n(t) \in [\Gamma-\frac{\xi}{\eta} \delta_1 g_{\max} V, \Gamma+ \frac{\xi}{\eta} \delta_2 g_{\max} V]$, then $0 \leq \sum_{m \in {\cal N}_n^o} P_{[n,m]}^*(t) \leq P_{\max}$. We therefore have
    \begin{itemize}
    \item[i)] $E_n(t+1) \geq \eta (\Gamma-\frac{\xi}{\eta} \delta_1 g_{\max} V) -\frac{P_{\max}}{\xi} \geq \frac{P_{\max}}{\xi}-\frac{P_{\max}}{\xi}=0$, due to $\Gamma \geq \Gamma^{\min}$ and the definition of $\Gamma^{\min}$ in (\ref{eq.GV1});
    \item[ii)] $E_n(t+1) \leq \eta (\Gamma+ \frac{\xi}{\eta} \delta_2 g_{\max} V) + \xi e_{\max} \leq E_{\max}-\xi e_{\max}  +\xi e_{\max}  = E_{\max}$, due to $\Gamma \leq \Gamma^{\max}$ as with c1-ii).
    \end{itemize}

\item[c3)] If $E_n(t) \in (\Gamma+ \frac{\xi}{\eta} \delta_2 g_{\max} V, E_{\max}]$, it follows from Lemma 1 that $\sum_{m \in {\cal N}_n^o} P_{[n,m]}^*(t) = P_{\max}$. We have
    \begin{itemize}
    \item[i)]  $E_n(t+1) \geq \eta (\Gamma+ \frac{\xi}{\eta} \delta_2 g_{\max} V) -\frac{P_{\max}}{\xi} \geq \frac{P_{\max}}{\xi} + \xi(\delta_1+\delta_2) g_{\max} V -\frac{P_{\max}}{\xi} \geq 0$, due to $\Gamma \geq \Gamma^{\min}$ as with c2-i);
    \item[ii)] $E_n(t+1) \leq \eta E_{\max} -\frac{P_{\max}}{\xi} +\xi e_{\max} \leq E_{\max}$, due to condition (\ref{eq.A1}).
    \end{itemize}
\end{itemize}

Cases c1)--c3) together prove part ii) of the proposition.

\subsection{Proof of Proposition 2}

Consider the queues under the proposed algorithm. By squaring both sides of (\ref{eq.Q}), and using the fact that $([x]^+)^2 \leq x^2$, we have
\begin{align}
    &[Q_n^c(t+1)]^2-[Q_n^c(t)]^2 \notag \\
    & \leq \left[\sum_{m \in {\cal N}_n^o} \mu_{[n,m]}^{c*}(t) \right]^2 + \left[\sum_{m \in {\cal N}_n^i} \mu_{[m,n]}^{c*}(t) + R_n^{c*}(t) \right]^2 \notag \\
    & ~~~~ -2Q_n^c(t)\left[ \sum_{m \in {\cal N}_n^o} \mu_{[n,m]}^{c*}(t) - \sum_{m \in {\cal N}_n^i} \mu_{[m,n]}^{c*}(t) - R_n^{c*}(t)\right]. \notag
\end{align}
Recall that $\sum_{m \in {\cal N}_n^o} \mu_{[n,m]}^{c*}(t) \leq d_{\max} \mu_{\max}$ and $\sum_{m \in {\cal N}_n^i} \mu_{[m,n]}^{c*}(t) + R_n^{c*}(t) \leq d_{\max} \mu_{\max}+R_{\max}$. Upon defining $\tilde{B}_1 = \frac{1}{2} d_{\max}^2 \mu_{\max}^2 + \frac{1}{2}(d_{\max}\mu_{\max}+R_{\max})^2$, we readily have:
\begin{align}
    & \frac{1}{2}([Q_n^c(t+1)]^2-[Q_n^c(t)]^2) \notag \\
    & \leq \tilde{B}_1 - Q_n^c(t)\left[ \sum_{m \in {\cal N}_n^o} \mu_{[n,m]}^{c*}(t) - \sum_{m \in {\cal N}_n^i} \mu_{[m,n]}^{c*}(t) - R_n^{c*}(t)\right]. \label{eq.dQ}
\end{align}

Similarly, we can also derive
\begin{align}
    &\frac{1}{2}([E_n(t+1)-\Gamma]^2-[E_n(t)-\Gamma]^2) \notag \\
    & \leq -\frac{1}{2}(1-\eta^2) [E_n(t)-\Gamma]^2 \notag \\
    & ~~~~ + \frac{1}{2} \left[\frac{\sum_{m \in {\cal N}_n^o} P_{[n,m]}^*(t)}{\xi} - \xi e_n(t) +(1-\eta)\Gamma\right]^2 \notag\\
    & ~~~~ -\eta(E_n(t)-\Gamma)\left[\frac{\sum_{m \in {\cal N}_n^o} P_{[n,m]}^*(t)}{\xi} - \xi e_n(t) +(1-\eta)\Gamma\right] \notag \\
    & \leq B_2 - \frac{\eta}{\xi}(E_n(t)-\Gamma) \sum_{m \in {\cal N}_n^o} P_{[n,m]}^*(t) \notag \\
    & ~~~~ + \eta (E_n(t)-\Gamma)[\xi e_n(t) - (1-\eta)\Gamma] \label{eq.dE}
\end{align}
where $B_2 = \frac{1}{2}\max\{[\frac{P_{\max}}{\xi}+(1-\eta)\Gamma]^2, [-\xi e_{\max}+(1-\eta)\Gamma]^2\}$.

Consider the Lyapunov function $L(t):=\frac{1}{2} \sum_{n,c} [Q_n^c(t)]^2 + \frac{1}{2} \sum_n [E_n(t)-\Gamma]^2$. By summing (\ref{eq.dQ}) over all $(n,c)$ and (\ref{eq.dE}) over all $n$, and by defining $\tilde{B} = N^2 \tilde{B}_1 + N B_2$, we have
\begin{align}
    & \Delta(t):=L(t+1)-L(t) \notag \\
    & \leq \tilde{B} - \sum_{n,c} Q_n^c(t)\left[ \sum_{m \in {\cal N}_n^o} \mu_{[n,m]}^{c*}(t) - \sum_{m \in {\cal N}_n^i} \mu_{[m,n]}^{c*}(t) - R_n^{c*}(t)\right] \notag \\
    & ~~~~ - \sum_n \frac{\eta}{\xi}(E_n(t)-\Gamma) \sum_{m \in {\cal N}_n^o} P_{[n,m]}^*(t)  \notag \\
    & ~~~~ + \sum_n \eta (E_n(t)-\Gamma)[\xi e_n(t) - (1-\eta)\Gamma]
\end{align}
Furthermore, we have
\begin{align}
    & \Delta(t) - V \sum_{n,c} U_n^c(R_n^{c*}(t)) + \Theta \sum_n \sum_c \sum_{m \in {\cal N}_n^o} \mu_{[n,m]}^{c*}(t) \notag \\
    & \leq \tilde{B} - \sum_{n,c} [V U_n^c(R_n^{c*}(t)) - Q_n^c(t)  R_n^{c*}(t)] \notag \\
    & ~~~~ - \sum_n \Bigl[\sum_c \sum_{m \in {\cal N}_n^o} \mu_{[n,m]}^{c*}(t)[ Q_n^c(t) -  Q_m^c(t) - \Theta]\Bigr. \notag \\
    & ~~~~~~~~~~~~~~ \Bigl. + \frac{\eta}{\xi}(E_n(t)-\Gamma) \sum_{m \in {\cal N}_n^o} P_{[n,m]}^*(t) \Bigr]  \notag \\
    & ~~~~ + \sum_n \eta (E_n(t)-\Gamma)[\xi e_n(t) - (1-\eta)\Gamma] \notag \\
    & \leq  \tilde{B} - \sum_{n,c} [V U_n^c(R_n^{c, stat}(t)) - Q_n^c(t)  R_n^{c,stat}(t)] \notag \\
    & ~~~~ - \sum_n \Bigl[\sum_c \sum_{m \in {\cal N}_n^o} \mu_{[n,m]}^{c,stat}(t)[ Q_n^c(t) -  Q_m^c(t) - \Theta]\Bigr. \notag \\
    & ~~~~~~~~~~~~~~ \Bigl. + \frac{\eta}{\xi}(E_n(t)-\Gamma) \sum_{m \in {\cal N}_n^o} P_{[n,m]}^{stat}(t) \Bigr]  \notag \\
    & ~~~~ + \sum_n \eta (E_n(t)-\Gamma)[\xi e_n(t) - (1-\eta)\Gamma] \label{eq.rhs}
\end{align}
where the second inequality is due to the fact that the proposed algorithm in fact minimizes the right-hand-side of (\ref{eq.rhs}).

Taking expectations for (\ref{eq.rhs}), we arrive at:
\begin{align}
    & \mathbb{E}\Bigl\{\Delta(t) - V \sum_{n,c} U_n^c(R_n^{c*}(t)) + \Theta \sum_n \sum_c \sum_{m \in {\cal N}_n^o} \mu_{[n,m]}^{c*}(t)\Bigr\} \notag \\
    & \leq  \tilde{B} - V \sum_{n,c} U_n^c(\bar{r}_n^{c, stat}(t)) \notag \\
    & ~~ + Q_n^c(t) \Bigl[ \bar{r}_n^{c, stat} + \sum_{m \in {\cal N}_n^i} \mathbb{E}\{\mu_{[m,n]}^{c,stat}(t)\} - \sum_{m \in {\cal N}_n^o}\mathbb{E}\{\mu_{[n,m]}^{c,stat}(t)\} \Big] \notag \\
    & ~~ + \sum_n \eta (E_n(t)-\Gamma) \Bigl[\xi \bar{e}_n - \frac{\sum_{m \in {\cal N}_n^o} P_{[n,m]}^{stat}(t)}{\xi} - (1-\eta)\Gamma \Bigr]  \notag \\
    & ~~ + \Theta \sum_n \sum_c \sum_{m \in {\cal N}_n^o} \mathbb{E}\{\mu_{[n,m]}^{c,stat}(t)\} \notag \\
    & \leq \tilde{B}  - V \tilde{U}^{opt} + N \eta(1-\eta)\max\{(E_{\max}-\Gamma)^2, \Gamma^2\} \notag \\
    & ~~ + N^2 \Theta d_{\max}\mu_{\max}
\end{align}
where the second inequality is due to (\ref{eq.opt}) in Lemma 2, $E_n(t) \in [0, E_{\max}]$ by Proposition 1, and $\sum_{m \in {\cal N}_n^o} \mathbb{E}\{\mu_{[n,m]}^{c,stat}(t)\} \leq d_{\max}\mu_{\max}$.

With $B = \tilde{B} + N \eta(1-\eta)\max\{(E_{\max}-\Gamma)^2, \Gamma^2\}+ N^2 \Theta d_{\max}\mu_{\max}$ and noticing $\mathbb{E}\{\mu_{[n,m]}^{c*}(t)\} \geq 0$, we then have
\[
    \mathbb{E}\{\Delta(t)\} - V \sum_{n,c} \mathbb{E}\{U_n^c(R_n^{c*}(t))\} \leq B  - V \tilde{U}^{opt} \leq B  - V U^{opt}
\]
Summing over all $t=1,2,\ldots, T-1$, we have
\begin{equation}
    \mathbb{E}\{L(T)\}-L(0) -V \sum_{t=0}^{T-1} \sum_{n,c}
    \mathbb{E}\{U_n^c(R_n^{c*}(t))\} \leq T(B-VU^{opt}) \nonumber
\end{equation}
which leads to
\begin{align}
    &\frac{1}{T} \sum_{t=0}^{T-1} \sum_{n,c} \mathbb{E}\{U_n^c(R_n^{c*}(t))\} \geq U^{opt} - \frac{B}{V} - \frac{L(0)}{VT}.
\end{align}
Using Jensen's inequality, we see that
\[
    \sum_{n,c} U_n^c(\frac{1}{T} \sum_{t=0}^{T-1} \mathbb{E}\{R_n^{c*}(t\}) \geq U^{opt} - \frac{B}{V} - \frac{L(0)}{VT}.
\]
The proposition follows by taking the limit $T \rightarrow \infty$ and using the definition of $\bar{r}_n^{c*}(T)$.


\begin{thebibliography}{99}

%\bibitem{Kan07} A. Kansal, J. Hsu, S. Zahedi, and M. Srivastava, ``Power management in energy harvesting sensor networks,'' {\em ACM Trans. Embedded Comput. Syst.}, vol. 6, no. 4, 2007.
%
%\bibitem{Rou04} S. Roundy, D. Steingart, L. Frechette, P. Wright, and J. Rabaey, ``Power sources for wireless sensor networks,'' {\em Wireless Sensor Netw.}, pp. 1--17, 2004.

\bibitem{Sud11} S. Sudevalayam and P. Kulkarni, ``Energy harvesting sensor nodes: Survey and implications,'' {\em IEEE Commun. Surveys Tuts.}, vol. 13, no. 3, pp. 443--461, 2011.

\bibitem{Lord} LORD MicroStrain. http://www.microstrain.com.

\bibitem{Tec11} E. Technology, {\em White paper on energy harvesting wireless}, July 2011.

\bibitem{Lei09} J. Lei, R. Yates, and L. Greenstein, ``A generic model for optimizing single-hop transmission policy of replenishable sensors,'' {\em IEEE Trans. Wireless Commun.}, vol. 8, no. 2, pp. 547--551, Feb. 2009.

\bibitem{Sha10} V. Sharma, U. Mukherji, V. Joseph, and S. Gupta, ``Optimal energy management policies for energy harvesting sensor nodes,'' {\em IEEE Trans. Wireless Commun.}, vol. 9, no. 4, pp. 1326--1336, Apr. 2010.

\bibitem{Yan12} J. Yang and S. Ulukus, ``Optimal packet scheduling in an energy harvesting communication system,'' {\em IEEE Trans. Commun.}, vol. 60, no. 1, pp. 220--230, Jan. 2012.

\bibitem{Ho12} C. Ho and R. Zhang, ``Optimal energy allocation for wireless communications with energy harvesting constraints,'' {\em IEEE Trans. Signal Process.}, vol. 60, no. 9, pp. 4808--4818, Sep. 2012.

\bibitem{Tut12} K. Tutuncuoglu and A. Yener, ``Optimum transmission policies for battery limited energy harvesting nodes,'' {\em IEEE Trans. Wireless Commun.}, vol. 11, no. 3, pp. 1180--1189, Mar. 2012.

\bibitem{Oze11} O. Ozel, K. Tutuncuoglu, Y. Jing, S. Ulukus, and A. Yener, ``Transmission with energy harvesting nodes in fading wireless channels: Optimal policies,'' {\em IEEE J. Sel. Areas Commun.}, vol. 29, no. 8, pp. 1732--1743, Sep. 2011.

\bibitem{Yang12} J. Yang, O. Ozel, and S. Ulukus, ``Broadcasting with an energy harvesting rechargeable transmitter,'' {\em IEEE Trans. Wireless Commun.}, vol. 11, no. 2, pp. 571--583, Feb. 2012.

\bibitem{Ant11} M. Antepli, E. Uysal-Biyikoglu, and H. Erkal, ``Optimal packet scheduling on an energy harvesting broadcast link,'' {\em IEEE J. Sel. Areas Commun.}, vol. 29, no. 8, pp. 1721--1731, Sep. 2011.

\bibitem{Oze12} O. Ozel, Y. Jing, S. Ulukus, ``Optimal broadcast scheduling for an energy harvesting rechargeable transmitter with a finite capacity battery,'' {\em IEEE Trans. Wireless Commun.}, vol. 11, no. 6, pp. 2193--2203, Jun. 2012.

\bibitem{YANG12} J. Yang and S. Ulukus, ``Optimal packet scheduling in a multiple access channel with energy harvesting transmitters,'' {\em J. Commun. Netw.}, vol. 14, no. 2, pp. 140--150 , Apr. 2012.

\bibitem{TUT12} K. Tutuncuoglu and A. Yener, ``Sum-rate optimal power policies for energy harvesting transmitters in an interference channel,'' {\em J. Commun. Netw.}, vol. 14, no. 2, pp. 151--161 , Apr. 2012.

%\bibitem{Tutun12} K. Tutuncuoglu and A. Yener, ``The energy harvesting multiple access channel with energy storage losses,'' in {\em Proc. ITW}, Sep. 2012.

\bibitem{Hua13} C. Huang, R. Zhang, and S. Cui, ``Throughput maximization for the Gaussian relay channel with energy harvesting constraints,'' {\em IEEE J. Sel. Areas Commun.}, vol. 31, no. 8, pp. 1469--1479, Aug. 2013.

%\bibitem{Ahm13} I. Ahmed, A. Ikhlef, R. Schober, and R. Mallik, ``Joint power allocation and relay selection in energy harvesting AF relay systems,'' {\em IEEE Wireless Commun. Lett.}, vol. 2, no. 2, pp. 239--242, Apr. 2013.

\bibitem{YLuo13} Y. Luo, J. Zhang, and K. Letaief, ``Optimal scheduling and power allocation for two-hop energy harvesting communication systems,'' {\em IEEE Trans. Wireless Commun.}, vol. 12, no. 9, pp. 4729--4741, 2013.

\bibitem{Xu13} J. Xu and R. Zhang, ``Throughput optimal policies for energy harvesting wireless transmitters with non-ideal circuit power,'' {\em IEEE J. Sel. Areas Commun.}, vol. 32, no. 2, pp. 322--332, Feb. 2014.

\bibitem{Orh12} O. Orhan, D. Gunduz, and E. Erkip, ``Energy harvesting broadband communication systems with processing energy cost,'' {\em IEEE Trans. Wireless Commun.}, 2014.

\bibitem{Oze13} O. Ozel, K. Shahzad, and S. Ulukus, ``Optimal energy allocation for energy harvesting transmitters with hybrid energy storage and processing cost,'' {\em IEEE Trans. Signal Process.}, vol. 62, no. 12, pp. 3232--3245, 2014.

%\bibitem{Bai11} Q. Bai, J. Li, and J. Nossek, ``Throughput maximizing transmission strategy of energy harvesting nodes,'' in {\em Proc. IWCLD}, 2011.

\bibitem{Wan15} X. Wang, Z. Nan, and T. Chen, ``Optimal MIMO broadcasting for energy harvesting transmitter with non-ideal circuit power consumption,'' {\em IEEE Trans. Wireless Commun.}, vol. 14, no. 5, pp. 2500--2512, May 2015.

\bibitem{Dev12} B. Devillers and D. Gunduz, ``A general framework for the optimization of energy harvesting communication systems with battery imperfections,'' {\em J. Commun. Netw.}, vol. 14, no. 2, pp. 130--139, Apr. 2012.

\bibitem{Luo13} S. Luo, R. Zhang, and T. Lim, ``Optimal save-then-transmit protocol for energy harvesting wireless transmitters,'' {\em IEEE Trans. Wireless Commun.}, vol. 12, no. 3, pp. 1196--1207, Mar. 2013.

\bibitem{Bla13} P. Blasco, D. Gunduz, and M. Dohler, ``A learning theoretic approach to energy harvesting communication system optimization,'' {\em IEEE Trans. Wireless Commun.}, vol. 12, no. 4, pp. 1872--1882, Apr. 2013.

\bibitem{Sri13} R. Srivastava and C. E. Koksal, ``Basic performance limits and tradeoffs in energy-harvesting sensor nodes with finite data and energy storage,'' {\em IEEE/ACM Trans. Netw.}, vol. 21, no. 4, pp. 1049--1062, 2013.

\bibitem{Lin07} L. Lin, N. Shroff, and R. Srikant, ``Asymptotically optimal energy-aware routing for multihop wireless networks with renewable energy sources,'' {\em IEEE/ACM Trans. Netw.}, vol. 15, no. 5, pp. 1021--1034, 2007.

\bibitem{Che14} S. Chen, P. Sinha, N. Shroff, and C. Joo, ``A simple asymptocially optimal joint energy allocation and routing schemes in rechargable sensor networks,'' {\em IEEE/ACM Trans. Netw.}, vol. 22, no. 4, pp. 1325--1336, Aug. 2014.

%\bibitem{Geo06} L. Georgiadis, M. Neely, and L. Tassiulas, ``Resource allocation and cross-layer control in wireless networks,'' {\em Found. Trends Netw.}, vol. 1, no. 1, pp. 1--144, 2006.

%\bibitem{Liu10} R. Liu, P. Sinha, and C. Koksal, ``Joint energy management and resource allocation in rechargeable sensor networks,'' in {\em Proc. INFOCOM}, pp. 902--910, 2010.

\bibitem{Gat10} M. Gatzianas, L. Georgiadis, and L. Tassiulas, ``Control of wireless netwokrs with rechargeable batteries,'' {\em IEEE Trans. Wireless Commun.}, vol. 9, no. 2, pp. 581--593, Feb. 2010.

\bibitem{LHua13} L. Huang and M. Neely, ``Utility optimal scheduling in energy-harvesting networks,'' {\em IEEE/ACM Trans. Netw.}, vol. 21, no. 4, pp. 1117--1130,  2013.

\bibitem{Mao12} Z. Mao, C. Koksal, and N. Shroff, ``Near optimal power and rate control of multi-hop sensor networks with energy replenishment: Basic limitations with finite energy and data storage,'' {\em IEEE Trans. Automat. Contr.}, vol. 57, no. 4, pp. 815--829, 2012.

\bibitem{Qin15} J. Qin, Y. Chow, J. Yang, and R. Rajagopal, ``Online modified greedy algorithm for storage control under uncertainty,'' {\em IEEE Trans. Power Syst.}, 2015 (accepted), [Online]. Available: http://arxiv.org/pdf/1405.7789v4.pdf.

\bibitem{Urg11} R. Urgaonkar, B. Urgaonkar, M. Neely, and A. Sivasubramaniam, ``Optimal power cost management using stored energy in data centers,'' in {\em Proc. ACM SIGMETRICS}, pp. 221--232, San Jose, CA, June 2011.

%\bibitem{Lak14} S. Lakshminaryana, H. V. Poor, and T. Quek, ``Cooperation and storage trade-offs in power grids with renewable energy resources,'' {\em IEEE J. Sel. Areas Commun.}, vol. 32, no. 7, pp. 1--12, Jul. 2014.

\bibitem{WangJSAC15} X. Wang, Y. Zhang, T. Chen, and G. B. Giannakis, ``Dynamic energy management for smart-grid powered coordinated multipoint systems,'' {\em IEEE J. Sel. Areas Commun.}, vol. 34, no. 6, Jun. 2016.

\bibitem{Lin06} X. Lin and N. Shroff, ``The impact of imperfect scheduling on cross-layer congestion control in wireless networks,'' {\em IEEE Trans. Netw.}, vol. 14, no. 2, pp. 302--315, Apr. 2006.

\bibitem{Nee10} M. Neely, {\em Stochastic Network Optimization with Applications to Communication and Queueing Systems}, Morgan \& Claypool, 2010.

\bibitem{Nee10-2} M. Neely, ``Stability and capacity regions for discrete time queueing networks,'' ArXiv Technical Report: arXiv:1003.3396v1, Mar. 2010.

\bibitem{convex} S. Boyd and L. Vandenberghe, {\em Convex Optimization}, Cambridge University Press, 2004.
\end{thebibliography}
\end{document}